\newtheorem{defn}{Definition} 
\newtheorem{obs}{Observation}
\newtheorem{theorem}{Theorem}
\newtheorem{lemma}{Lemma}
\newtheorem{corollary}{Corollary}
\newcommand{\rop}{{\sf ROP}}
\newcommand{\rof}{{\sf ROF}}
\newcommand{\Q}{{\mathcal{Q}}}
\newcommand{\rank}{\operatorname{rank}}
\newcommand{\perm}{{\sf perm}}
\newcommand{\var}{\sf var}
\newcommand{\sep}{ rank\text{-}(1,2)\text{-}separator}
\newcommand{\valu}{{\sf value}}
\newcommand{\pcm}{\widehat{M}}
\newcommand{\mrank}{\operatorname{maxrank}}
\title{ Limitations of sum of products of Read-Once Polynomials}
\author{C. Ramya\\Department of Computer Science and Engineering \\
IIT Madras, Chennai INDIA\\
 \texttt{ramya@cse.iitm.ac.in} \and B. V. Raghavendra Rao \\Department of Computer Science and Engineering \\ 
IIT Madras, Chennai INDIA \\
  \texttt{bvrr@cse.iitm.ac.in}}
\begin{document}

\maketitle

\begin{abstract}
 We study limitations of polynomials computed by depth two circuits built over  read-once polynomials (ROPs) and depth three syntactically multi-linear formulas.    
 We prove an exponential lower bound for the size of the $\Sigma\Pi^{[N^{1/30}]}$ arithmetic circuits built over syntactically multi-linear $\Sigma\Pi\Sigma^{[N^{8/15}]}$  arithmetic circuits computing a product of variable disjoint linear forms on $N$ variables.  We extend the result to the case of $\Sigma\Pi^{[N^{1/30}]}$ arithmetic circuits built over ROPs of unbounded depth, where  the number of variables with $+$ gates as a parent in an proper sub formula is bounded by $N^{1/2+1/30}$. We show that the same lower bound holds for the permanent polynomial. Finally we obtain an exponential lower bound for the sum of ROPs computing a  polynomial in ${\sf VP}$ defined by Raz and Yehudayoff~\cite{RY09}.

 Our results demonstrate a class of formulas of unbounded depth with exponential size lower bound against the permanent and can be seen as an exponential improvement over the  multilinear formula size lower bounds given by Raz~\cite{Raz04a} for a sub-class of multi-linear and non-multi-linear formulas.  
 Our proof techniques are built on the one developed by Raz~\cite{Raz04a} and later extended by Kumar et. al.~\cite{KMS13} and are based on  non-trivial analysis of ROPs under random partitions.    Further, our results exhibit  strengths and limitations of the lower bound techniques introduced by Raz~\cite{Raz04a}.
\end{abstract}

\newpage
\section{Introduction}
\label{sec:intro}
 More than three decades ago,   Valiant~\cite{Val79} developed the theory of Algebraic Complexity classes based on arithmetic circuits as the model of algebraic computation.    Valiant considered the   permanent polynomial ${\perm}_n$ defined over an $n\times n $ matrix $X= (x_{i,j})_{1\le i,j\le n}$ of variables:
\begin{center}
$\perm_n(X) = \sum\limits_{\pi\in {S}_n}\prod\limits_{i=1}^n x_{i,\pi(i)},$
\label{perm}
\end{center}

where $S_n$ is the set of all permutations on $n$ symbols.  Valiant~\cite{Val79} showed that the polynomial family $(\perm_n)_{n\ge 0}$ is complete for the complexity class ${\sf VNP}$. Further, Valiant~\cite{Val79} conjectured that ${\perm}_n$ does not have polynomial size arithmetic circuits.  Since then, obtaining super polynomial size lower bounds for arithmetic circuits computing  $\perm_n$ has been a pivotal problem in Algebraic Complexity Theory. However, for general classes of arithmetic circuits,  the best known lower bound is quadratic in the number of variables~\cite{MR04}. 

Naturally, the focus has been on proving lower bound for $\perm_n$ against restricted classes of circuits.  Grigoriev and Karpinski~\cite{GK98} proved an exponential size lower bound for depth three circuits of constant size over finite fields.  Agrawal and Vinay~\cite{AV08} (See also~\cite{Tav13,Koi12}) showed that proving exponential lower bounds against depth four arithmetic circuits  is enough to resolve Valiant's conjecture, and hence explaining the lack of progress in extending the results in~\cite{GK98} to higher depth circuits.  This was strengthened to depth three circuits over infinite fields by Gupta et. al.~\cite{KKS13}. 

Recently,  Gupta et. al.~\cite{GKKS14} obtained  a   $2^{\Omega(\sqrt{n}\log n)}$ size lower  bound for homogeneous depth four circuits computing $\perm_n$  where  the bottom fan-in is bounded by $O(\sqrt{n})$. The techniques introduced in~\cite{GKKS14,Kay12} have been generalized and applied to prove lower bounds against various classes of constant depth arithmetic circuits, regular arithmetic formulas and homogeneous arithmetic formulas.~(See e.g., \cite{KLSS14a,KS14,KS15}.) Exhibiting polynomials that have exponential lower bound against concrete classes of arithmetic circuits is an important research direction.  

 In 2004, Raz~\cite{Raz04a} showed that any multilinear formula computing $\perm_n$ requires size $n^{\Omega(\log n)}$, which was one of the first super polynomial lower bounds against formulas of unbounded depth. Further, in~\cite{Raz04b}, Raz extended this to separate multilinear formulas from multilinear circuits.  Raz's work lead to  several lower bound results, most significant being an exponential separation of constant depth multilinear circuits~\cite{RY09}. More recently, Kumar et. al~\cite{KMS13} extended the techniques developed in~\cite{Raz04a} to prove lower bounds against non-multilinear circuits and formulas.
 \paragraph*{•}
\noindent{\bf Motivation and our Model :}
 Depth three $\Sigma\Pi\Sigma$ circuits are in fact   $\Sigma\Pi$ circuits built over linear forms. A linear form can be seen as the simplest form of read-once formulas ($\rof$):  formulas  where a variable appears at most once as a leaf label. Polynomials computed by $\rof$s are called read-once polynomials or $\rop$s.  There are two natural generalizations of the $\Sigma\Pi\Sigma$ model: 1) Replace linear forms by sparse polynomials, this leads to the well studied $\Sigma\Pi\Sigma\Pi$ circuits; and 2) Replace linear forms with more general read-once formulae, this leads to the class of $\Sigma\Pi$ circuits over read-once formulas or $\Sigma\Pi\rop$ for short.
 
 In this paper, we consider the second extension, i.e., $\Sigma\Pi\rop$.  Restricted forms of 
 $\Sigma\Pi\rop$ were already considered in the literature. For example, Shpilka and Volkovich~\cite{SV08} obtained identity testing algorithms for the sum of $\rop$s. Further \cite{MRS14} gives identity tests  for $\Sigma\Pi\rop$ when the top fan-in is restricted to two.
 
 Apart from being a natural generalization of $\Sigma\Pi\Sigma$ circuits, the class  $\Sigma\Pi\rop$ can be  seen as  building non-multi-linear polynomials using the simplest possible multi-linear polynomials viz. $\rop$s. 
  \paragraph*{•}
\noindent{\bf Our Results : }
We study the limitations of the model  $\Sigma\Pi\rop$ for some restricted class of circuits. Firstly, we prove,
\begin{theorem}
\label{thm:depth-three}
Let $f_{i,j}$ be  $N$-variate $\Sigma\Pi\Sigma$ syntactic multi-linear formulas with bottom $\Sigma$-fan-in at most $ N^{1/2+ \lambda}$ where $\lambda\le 1/30$, and top $\Sigma$-fan-in  at most $s'$ for   $1\le i
\le s$ and $1\le j\le t$.  Also assume that $t\le N^{1/30}$. There is a  product of variable disjoint linear forms  $p_{lin}$ such that, if 
$ \sum_i \prod_j f_{i,j} = p_{lin}$
then $s\cdot s'=  2^{\Omega(N^{1/4})}$.
\end{theorem}
Our arguments do not directly generalize to the case of unbounded depth $\rop$s with small bottom $\Sigma$ fan-in.  Nevertheless, we obtain a generalization of Theorem~\ref{thm:depth-three}, allowing $\rop$s of unbounded depth with a more stringent restriction than bottom $\Sigma$-fan-in.  Let $F$ be an $\rof$ and for a gate $v$ in $F$, let ${\sf sum\mbox{-}fan\mbox{-}in}(v)$ be the number of variables in the sub-formula rooted at $v$ whose parents are labelled as $+$. Then $s_F$ is the maximum value of ${\sf sum\mbox{-}fan\mbox{-}in}(v)$, where the maximum is taken over all $+$ gates in $F$ excluding the top layer of $+$ gates. Note that, in the case of $\Sigma\Pi\Sigma$ $\rop$s, $s_F$ is equivalent to the bottom fan-in. For an $\rop$ $f$, $s_f$ is the smallest value of $s_F$ among all $\rof$s  $F$ computing $f$. We prove,    
\begin{theorem}
\label{thm:sum-product-rop} 
Let $f_{i,j}$ be $\rop$s with $s_{f_{i,j}}\le N^{1/2+ \lambda}$  for $\lambda\le 1/30$,  $1\le i
\le s$ and $1\le j\le t$, where $t\le N^{1/30}$. There is a product of linear forms ${ p_{lin}}$  such that, if $ \sum_i \prod_j f_{i,j} = {p_{lin}}$
then $s =  2^{\Omega(N^{1/4})}$.
\end{theorem}

As far as we know, this is the first exponential lower bound for a sub-class of non-multi-linear formulas of unbounded depth.
 It can be noted that our result above does not depend on the depth of the $\rop$s. 
Further, note that even though a product of linear forms is a simple linear  projection of $\perm_n$, Theorem~\ref{thm:sum-product-rop} does not imply a lower bound for $\perm_n$ due to restrictions on $s_F$, since linear  projections might change the bottom fan-in of the resulting $\rop$s. However, we prove,
\begin{theorem}
\label{thm:sum-product-rop-perm} 
Let $f_{i,j}$ be $\rop$s with $s_{f_{i,j}}\le N^{1/2+ \lambda}$ for $\lambda\le 1/30$, $1\le i
\le s$ and $1\le j\le t$, for $N=n^2$ and $t\le N^{1/30}$. Then,  if 
$ \sum_i \prod_j f_{i,j} = \perm_n$
then $s =  2^{\Omega(N^{\epsilon})}$ for some $\epsilon>0$.
\end{theorem}

Finally, we show that the polynomial $g$ defined by Raz-Yehudayoff~\cite{RY08} cannot be   written as sum of sub-exponentially many  $\rop$s:

\begin{theorem}
\label{thm:sum-rop}
There is a polynomial $g\in {\sf VP}$ such that for any $\rop$s $f_1,\ldots, f_s$, if
$\sum_{i}f_i = g $, then we have $s=2^{\Omega(n/\log n)}$.
\end{theorem}

\noindent{\bf Related Results :}
Shpilka and Volkovich~\cite{SV08} proved a linear lower bound for a special class of $\rop$s to sum-represent the polynomial $x_1\cdots x_n$ and used it crucially in their identity testing algorithm. Theorem~\ref{thm:sum-rop} is an exponential lower bound against the same model as in~\cite{SV08}, however against a polynomial in ${\sf VP}$. 
It should be noted that the results in  Raz~\cite{Raz04a} combined with~\cite{KMS13} immediately implies a lower bound of $n^{\Omega(\log n)}$ for the  sum of $\rop$s. Our results are  an exponential improvement of bound given by~\cite{Raz04a}. 

 Kayal~\cite{Kay12} showed that  at least $2^{n/d}$ many polynomials of degree $d$ are required to represent the polynomial $x_1\dots x_n$  as sum of powers. Our model is significantly different from the one in ~\cite{Kay12} since it includes high degree monomials, though the powers are restricted to be sub-linear, whereas Kayal's argument works against arbitrary powers.
\paragraph*{•} 
\noindent{\bf Our Techniques} 
Our techniques are broadly based on the  partial derivative matrix technique introduced by Raz~\cite{Raz04a} and later extended by Kumar et. al~\cite{KMS13}.  It can be noted that the lower bounds obtained in \cite{Raz04a} are super polynomial and not exponential.   Though Raz-Yehudayoff~\cite{RY09} proved exponential lower bounds, their argument works only against bounded depth multilinear  circuits. Further, the arguments in \cite{Raz04a,RY09} do not work for the case of non-multilinear circuits, and fail even in the case of products of two multilinear formulas. This is because  rank of the partial derivative matrix, a complexity measure  used by \cite{Raz04a,RY09} (see Section~\ref{sec:prelim} for a definition) is defined only for multi-linear polynomials.  Even though  this issue can be overcome by a generalization introduced by Kumar et. al~\cite{KMS13}, the limitation  lies in the fact that the upper bound of $2^{n-n^{\epsilon}}$ for an $n^2$ or $2n$ variate polynomial, obtained in \cite{Raz04a} or \cite{RY09} on the measure for the underlying arithmetic formula model is insufficient to handle products of two $\rop$s. 

Our approach to prove Theorems~\ref{thm:sum-product-rop} and \ref{thm:sum-product-rop-perm}  lie in  obtaining an exponentially stronger upper bounds (see Lemma~\ref{lem:rank-sum-product} )  on the rank of the partial derivative matrix of  an $\rop$ $F$  on $N$ variables where   $s_F\le N^{1/2+1/30}$. Our proof  is a technically involved  analysis of the structure of $\rop$s under random partitions of the variables. Even though the restriction on $s_F$ might  look un-natural, in Lemma~\ref{lem:prod-linear}, we show that a simple product of variable disjoint linear forms in $N$-variables, with $s_F\ge N^{2/3}$ achieve exponential rank with probability $1-2^{-\Omega(N^{1/3})}$. Thus our results highlight the strength and limitations of the techniques developed in  \cite{RY09,KMS13}  to the case of  non-multi-linear formulas.   

Finally proof of Theorem~\ref{thm:sum-rop} is based on an observation  pointed out to the authors by an anonymous reviewer. We have included it here since the details have been worked out completely by the authors. 

Due to space limitations, all the missing proofs can be found in Sections~5,6 and 7.

\section{Preliminaries}
\label{sec:prelim}
Let $\mathbb{F}$ be an arbitrary field and $X = \{x_1,\ldots,x_N \}$ be a set of variables. An {\em arithmetic circuit} $\mathcal{C}$ over $\mathbb{F}$ is a directed acyclic graph with vertices of in-degree 0 or 2 and exactly  one vertex of out-degree 0 called the output gate. The vertices of in-degree 0 are called {\em input gates}  and are labeled  by elements from $X \cup \mathbb{F}$. The vertices of in-degree more than 0 are labeled by either $+$ or $\times$. Thus every gate of the circuit naturally computes a polynomial. The polynomial $f$ computed by $\mathcal{F}$ is the polynomial computed by the output gate of the circuit. 

An {\em arithmetic formula} is a an arithmetic circuit $\mathcal{F}$ where every gate has out-degree bounded by $1$, i.e., the underlying undirected graph $\mathcal{F}$ is a tree.

The {\em size} of an arithmetic circuit $\mathcal{F}$   is  the number of gates in $\mathcal{F}$. For any gate $v$  {\em depth} of $v$ is the  length of the longest path from an input gate  to $v$ gate in $\mathcal{F}$. Depth of $\mathcal{F}$ is defined as the depth of its output gate.

An {\em arithmetic read-once formula} ($\rof$ for short)  is an  arithmetic formula $\mathcal{F}$ over $X$ where every input variable $x\in X$ occurs as a label of at most once  $\mathcal{F}$. The polynomial $f$ computed by an $\rof$  $\mathcal{F}$ is called a {\em read-once-polynomial} or $\rop$. 
  
Let $f(y_1,\ldots,y_m,z_1,\ldots,z_m) \in \mathbb{F}[y_1,\ldots, y_m, z_1,\ldots, z_m]$ be a multilinear polynomial. The {\em partial derivative matrix of $f$} denoted by $M_f$~\cite{Raz04a} is a $2^{m}\times 2^{m}$ matrix  defined as follows: The rows of $M_f$ are labeled by all possible  multilinear monomials in $\{y_1,\ldots,y_m\}$ and the columns of $M_f$ be labeled by all possible  multilinear monomials in $\{z_1,\ldots,z_m\}$. For any two  multilinear monomials  $p$ and $q$, the entry $M_f[p,q]$ is the coefficient of  $p\cdot q$ in $f$. 

\begin{lemma}{\em \cite{Raz09}}(Sub-Additivity.)
\label{sub-additivity}
Let $f=f_1+f_2$  where $f, f_1$ and $f_2$ are multilinear polynomials in $\mathbb{F}[y_1,\ldots, y_m, z_1,\ldots, z_m]$.  Then,
$rank(M_f)\leq rank(M_{f_1}) + rank(M_{f_2}).$ Moreover, if  $\var(f_1)\cap \var(f_2)=\emptyset$ then $rank(M_f)= rank(M_{f_1}) + rank(M_{f_2}).$
\end{lemma}

\begin{lemma}{\em \cite{Raz09}}(Sub-Multiplicativity.)
\label{sub-multiplicativity}
Let $f= f_1\times f_2$, where $f,f_1$ and $f_2$ are multilinear polynomials in $\mathbb{F}[y_1,\ldots,y_m, z_1,\ldots,z_m]$, and $\var(f_1)\cap \var(f_2)=\emptyset$. Then, 
$rank(M_f)= rank(M_{f_1}) \cdot rank(M_{f_2}).$ 
\end{lemma}

Kumar et. al.~\cite{KMS13} generalized the notion of partial derivative matrix to include  polynomials that are not multilinear.
Let $Y = \{ y_1,\ldots,y_m\}$ and $Z=\{z_1,\ldots,z_m\}$.  Let $f\in\mathbb{F}[Y,Z]$ be a polynomial. The {\em polynomial coefficient matrix} of $f$ denoted by $\pcm_f$ is a $2^m\times 2^m$ matrix defined as follows. For  multilinear monomials $p$ and $q$ in variables $Y$ and $Z$ respectively, the entry $\pcm_f [p, q] = A$ if and only if $f$ can be uniquely expressed as $f = pq\cdot A + B$ where $A,B \in \mathbb{F}[Y,Z]$ such that $\var(A)\subseteq \var(p)\cup\var(q)$ and  $B$ does not have any monomial that is divisible by $p\cdot q$ and contains only variables present in $p$ and $q$.

\begin{obs}\cite{KMS13}
For a multilinear polynomial $f\in\mathbb{F}[Y,Z]$, we have $\pcm_f= M_f$.
\end{obs}

Observe that the matrix $\pcm_f$ has polynomial entries. Therefore $\rank(\pcm_f)$ is defined only under a substitution function that substitutes every variable in $f$ to a field element.

For any substitution function $S : Y\cup Z \rightarrow \mathbb{F}$, let us denote by ${\pcm_f}|_S$ the matrix obtained by substituting every variable in $f$ at each entry of $\pcm_f$ to the field element given by $S$.
Define,  
$\mrank(\pcm_f) \triangleq \max\limits_{S : Y\cup Z \rightarrow \mathbb{F}} \rank({\pcm_f}|_S).$
Having defined polynomial coefficient matrix $\pcm_f$ and $\mrank(\pcm_f)$ we now look at properties of $\pcm_f$ with respect to $\mrank$.
\begin{lemma}{\em \cite{KMS13}}(Sub-additivity.)
\label{sub-aditivity2}
Let $f,g \in \mathbb{F}[Y,Z]$. Then, 
$\mrank(\pcm_{f+g}) \leq \mrank(\pcm_f) + \mrank(\pcm_g).$
\end{lemma}
\begin{lemma}{\em \cite{KMS13}}(Sub-multiplicativity.) 
\label{sub-multiplicativity2}
Let $Y_1,Y_2\subseteq Y$ and $Z_1,Z_2\subseteq Z$ such that $Y_1 \cap Y_2 =\emptyset$ and $Z_1 \cap Z_2 =\emptyset$. Then for any polynomials $f\in\mathbb{F}[Y_1,Z_1]$, $g\in \mathbb{F}[Y_2,Z_2]$ we have: $\mrank(\pcm_{fg}) = \mrank(\pcm_f) \cdot \mrank(\pcm_g)$.
\end{lemma}


A partition of $X$ is a function  $\varphi : X \rightarrow Y \cup Z \cup \{0,1\}$ such that $\varphi$ is an injection  when restricted to $Y\cup Z$, i.e., $\forall x\neq x'\in X$, if  $\varphi(x)\in Y\cup Z$ and $\varphi(x')\in Y\cup Z$ then $\varphi(x)\neq \varphi(x')$.  
Let $F$ be an $\rof$ and $\varphi : X \rightarrow Y \cup Z \cup \{0,1\}$ be a partition function. Define $F^\varphi$ to be the formula obtained by replacing every variable $x$ that appears as a leaf in $F$ by $\varphi(x)$. Then the polynomial $f^\varphi$ computed by $F^\varphi$ is $f^\varphi = f(\varphi(x_1),\ldots,\varphi(x_n))$. Observe that $f^\varphi\in\mathbb{F}[Y,Z]$.   

An arithmetic formula $\mathcal{F}$ is said to be a {\em constant-minimal formula} if no gate $u$ in $F$ has both its children to be constants. Observe that for any arithmetic formula $F$, if there exists a gate $u$ in $F$ such that $u= a~{\sf op}~b, a,b\in\mathbb{Z}$ then  we can replace $u$ in $F$ by the constant $a~ {\sf op}~ b$, where ${\sf op}\in\{+,-\}$. Thus we  assume without loss of generality that $\mathcal{F}$ is constant-minimal.  

We need some observations on formulas that compute natural numbers.  Recall that an arithmetic formula $\mathcal{F}$ is said to be monotone if $\mathcal{F}$ does not contain any negative constants.

Let $G$ be a monotone arithmetic formula were the leaves are labeled numbers in $\mathbb{N}$. Then for any gate $v$ in $G$, the value of $v$ denoted by $\valu(v)$ is  defined  as : If $u$ is a leaf then $\valu(u) = a$ where $a\in\mathbb{N}$ is the label of $u$ and  $u = u_1 ~{\sf op}~ u_2$  then $\valu(u)=\valu(u_1)~{\sf op}~\valu(u_2)$, where ${\sf op}\in \{+,\times\}$.
 Finally, $\valu(G)$ is the value of the  output gate of $G$. 
Let $G$ be a monotone arithmetic formula with leaves labelled by either $1$ or $2$. A node $u$ a is called a $\sep$ if $u$ is a leaf and $\valu(u)=2$ or $u = u_1 + u_2 $ with $\valu(u)=2$ and $\valu(u_1)=\valu(u_2)=1$.
The following is a simple upper bound on the value computed by a formula.

\begin{lemma}
\label{max-value}
Let $G$ be a binary monotone arithmetic formula with $t$ leaves. If every leaf in $G$ takes a value at most $N>1$, then $\valu(G)\leq N^t$.
\end{lemma}
\begin{proof}
The proof is by induction on the size of the formula. Base Case : $s=1$ \begin{itemize}
\item If $G$ has a single $+$ gate then $\value(G)\leq N+N \leq N^2$.
\item If $G$ has a single $\times$ gate then $\value(G)\leq N\cdot N = N^2$.
\end{itemize}
Induction Step : Let $u$ be the output gate of $G$ with children $u_1$ and $u_2$. Let the number of leaves in the sub formula rooted at $u_1$ and $u_2$ be $t_1$ and $t_2$.
\begin{itemize}
\item If $u$ is a $+$ gate. Then, $\valu(u)=  \valu(u_1)+\valu(u_2)$. By induction hypothesis, $\valu(u)\leq N^{t_1}+ N^{t_2} \leq N^{t_1+t_2} \leq N^t$.
\item If $u$ is a $\times$ gate. Then, $\valu(u)= \valu(u_1)\times \valu(u_2)$. By induction hypothesis, $\valu(u)\leq N^{t_1}\times N^{t_2} \leq N^{t_1+t_2} \leq N^t$.
\end{itemize}\end{proof}
Any formula with a large value should have a large number of $\sep$s.

\begin{lemma}
\label{lem:rank-2nodes}
Let $F$ be a binary monotone arithmetic formula with leaves labeled by either $1$ or $2$. If $\valu(F)>2^r$ then there exists at least $\frac{r}{\log N}$ nodes that are $\sep$s.
\end{lemma}
\begin{proof}
Let $F$ be a binary monotone arithmetic formula with leaves labeled by either $1$ or $2$.  First mark every node $u$ such that $u$ is a $\sep$ and remove sub-formula rooted at $u$ except $u$. Consider any leaf $v$
that remains unmarked and along the path from $v$ to root there is no node that is marked. Then $\valu(v)=1$.  Consider the unique path from $v$ to root in $F$. Let $p$ the first node in the path such that $\valu(p)\geq 2$. Let $p_1$ and $p_2$ be the children of $p$. Without
loss of generality let $p_1$ be an ancestor of $v$. Then observe that there is atleast one marked node(say $q$) in the sub-formula rooted at $p_2$. Set $\valu(q)=\valu(q)+1$. If $p$ is a $+$ gate set $p_1=0$ else set $p_1=1$.  Let $u_1,\ldots,u_t$ be the leaves of the resulting
formula at the end of this process. For every $1 \leq i\leq t$, we have $2\le \valu(u_i)\leq N$. Therefore by Lemma~\ref{max-value}, $\valu(F)\leq N^t$. Since $\valu(F)>2^r$, we have $2^r < N^t$. Therefore $t>\frac{r}{\log N}$ as required.
\end{proof}

Finally, we will use the following variants of Chernoff-Hoeffding bounds.
\begin{theorem}{\em \cite{DP09}}(Chernoff-Hoeffding bound)
\label{chernoff}
Let $X_1,X_2,\ldots,X_n$ be independent random variables. Let $X=X_1+X_2+\cdots+X_n$ and $\mu = \mathbb{E}[X]$. Then for any $\delta > 0$,
\begin{itemize}
\item[(1)] $\Pr[X>(1+\delta)\mu]<\left(\frac{\mathrm{e}^\delta}{(1+\delta)^{(1+\delta)}}\right)^{\mu}$; and
\item[(2)] $\Pr[X \geq (1+\delta)\mu] \leq \mathrm{e}^\frac{-\delta^2\mu}{3}$; and
\item[(3)] $\Pr[X \leq (1-\delta)\mu] \leq \mathrm{e}^\frac{-\delta^2\mu}{2}$.
\end{itemize}
\end{theorem}

\section{Hardness of representation for Sum of $\rop$s}
%
%
%
%
%
%

%
%

Let $X=\{x_1,\ldots,x_{2n}\}, Y=\{y_1,\ldots,y_{2n}\}, Z=\{z_1,\ldots,z_{2n}\}$. Define  $\cal{D'}$ as a distribution on the functions $\varphi:X\to Y\cup Z$ as follows : For $1\leq i\leq 2n$,
\begin{center}
$\varphi(x_{i}) \in  \begin{cases} Y &\mbox{with prob. } \frac{1}{2} \\
Z & \mbox{with prob. } \frac{1}{2} \end{cases}$
\end{center}
 Observe that $|\varphi(X)\cap Y| = |\varphi(X)\cap Z|$ is not necessarily true. 
Let $F$ be a binary arithmetic formula computing a polynomial $f$ on the variables $X=\{x_1,\ldots,x_{2n}\}$. Note that any gate with at least one variable as a child can be classified as:
\begin{itemize}
\item[(1)]type-$A$ gates :  sum gates both of whose children are variables,
\item[(2)]type-$B$ gates :  product gates both of whose children are variables,
\item[(3)]type-$C$ gates :  sum gates exactly one child of which is a variable and the other an internal gate; and
\item[(4)]type- $D$ gates: product gates exactly one child of which is a variable and the other an internal gate
\end{itemize}
 

%

Given any $\rof$ $F$, let there be $a$ type-$A$ gates, $b$ type-$B$, $c$ type-$C$  and $d$ type-$D$ gates in $F$. Note that $2a+2b+c+d = 2n$.  Let $\varphi\sim \mathcal{D}'$. Let there be $a'$ gates of type-$A$ that achieve $\rank$-1 under $\varphi$ and let $a''$ gates of type-$A$ that achieve $\rank$-2 under $\varphi$. Then, $a=a'+a''$.

\begin{lemma}\footnote{A brief outline of the proof of Lemma~\ref{lem:rank-upper bound} was suggested by an anonymous reviewer, the details included here for completeness and since the details were worked out completely by the authors.}
\label{lem:rank-upper bound}
Let $F$ be an $\rof$ computing an $\rop$ $f$ and $\varphi : X\to Y \cup Z$. Then,
$\rank(M_{f^\varphi})\leq  2^{a''+\frac{a'}{2}+{b} + \frac{c}{2}}. $
\end{lemma}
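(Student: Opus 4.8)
The natural route is a structural induction on the formula. I will prove, for every subformula $G$ of $F$ (itself an $\rof$, on its own variable set, with its own counts $a'_G, a''_G, b_G, c_G, d_G$ of the four gate types, where $a_G = a'_G + a''_G$ is the split that $\varphi$ induces inside $G$), that $\rank(M_{G^\varphi}) \le 2^{\Phi(G)}$ with $\Phi(G) := a''_G + \tfrac{a'_G}{2} + b_G + \tfrac{c_G}{2}$; taking $G = F$ yields the lemma. The counts are additive across the two children of a gate, so $\Phi$ splits exactly as the induction needs.

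Two facts about the coefficient matrix power the induction. At a product gate $g = g_1 \cdot g_2$ the two factors are on disjoint variable sets (the formula is read-once), so $M_{g_1 \cdot g_2} = M_{g_1} \otimes M_{g_2}$ after reindexing monomials, whence $\rank(M_{g_1 \cdot g_2}) = \rank(M_{g_1}) \cdot \rank(M_{g_2})$. At a sum gate $\rank(M_{g_1 + g_2}) \le \rank(M_{g_1}) + \rank(M_{g_2})$. A leaf $\varphi(x_i)$ gives $\rank = 1$. Granting these, most cases fall out: a leaf gives $1 = 2^0$; a type-$B$ gate computes a monomial, $\rank = 1 \le 2^{b_G}$; a type-$A$ gate $x_i + x_j$ computes a one-sided linear form of rank $1 \le 2^{1/2}$ when $\varphi$ puts $x_i, x_j$ on the same side, and otherwise has a rank-$2$ coefficient matrix, $2 = 2^{1}$ --- these are the $a'$ and $a''$ contributions to $\Phi$; a type-$D$ gate $H \cdot x_k$ gives $\rank(M_{H^\varphi}) \cdot 1 \le 2^{\Phi(H)} = 2^{\Phi(G)}$, since $d$ is absent from $\Phi$; and a product gate with two internal children gives $2^{\Phi(H_1) + \Phi(H_2)} = 2^{\Phi(G)}$ by multiplicativity and induction.

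What remains --- a type-$C$ gate $H + x_k$, and a sum gate $H_1 + H_2$ with two internal children --- is the heart of the matter, and where I expect the real work to be. The plain estimate $\rank(M_{g_1 + g_2}) \le \rank(M_{g_1}) + \rank(M_{g_2})$ does not close the induction, because $2^{\Phi_1} + 2^{\Phi_2}$ overshoots $2^{\Phi_1 + \Phi_2}$ as soon as one $\Phi_i$ is as small as $\tfrac{1}{2}$, and a subformula with $\Phi = \tfrac{1}{2}$ (a one-sided type-$A$ gate, possibly sitting below some type-$D$ gates, so of rank exactly $1$) does occur as a direct summand. The point to exploit is that under such a sum the ranks \emph{overlap} rather than add: once $g_1 + g_2$ already uses variables from both sides, adjoining a further fresh variable (type-$C$) or a further one-sided linear form does not keep raising the rank --- it saturates. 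I plan to capture this by carrying, alongside $\rank(M_{G^\varphi})$, two bits for each $g = G^\varphi$, recording whether the constant polynomial already lies in the linear span of the partial evaluations $\{g|_{Y=\alpha}\}$ (equivalently, whether adjoining a fresh $Z$-variable in a sum increases $\rank(M_g)$) and the symmetric statement for the $Y$-side. These bits propagate cleanly --- conjunctively at product gates, disjunctively at sum gates --- and are precisely what is needed to show that the extra ``$+1$'' threatened by a type-$C$ gate or a $\Phi = \tfrac{1}{2}$ summand is already absorbed whenever the rank has reached the bound, so that $\rank(M_{g_1 + g_2}) \le 2^{\Phi_1 + \Phi_2}$ survives in the boundary cases too. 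Pinning down this refined sum rule and checking that it closes every small-$\Phi$ case is the main obstacle; the multiplicative and base cases are routine.
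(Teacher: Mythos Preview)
Your diagnosis is sharper than the paper's own treatment. The paper runs the same structural induction but handles the two troublesome sum cases differently from you: for a type-$C$ gate $f=h+x$ it uses subadditivity together with a short case analysis on the value of $\Phi(h)$, and for a sum gate with two internal children it simply asserts the multiplicative estimate $\rank(M_{(g+h)^\varphi})\le \rank(M_{g^\varphi})\,\rank(M_{h^\varphi})$ and then plugs in the inductive hypothesis. That multiplicative estimate is harmless once both factors are at least~$2$ (it then follows from subadditivity), but it is not justified when one factor equals~$1$ --- precisely the $\Phi_i=\tfrac12$ scenario you isolated. So your route and the paper's diverge exactly at the sum gates, and your instinct to strengthen the induction hypothesis is the natural repair attempt.

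Unfortunately the repair via ``saturation bits'' cannot close the induction either, because the inequality \emph{as stated} is false. Take the read-once formula $f = x_5 + x_4\, x_3\,(x_1+x_2)$, so that $a=1$, $b=0$, $c=1$, $d=2$, and let $\varphi$ send $x_1\mapsto y_1$, $x_2\mapsto z_1$, $x_3\mapsto y_2$, $x_4\mapsto z_2$, $x_5\mapsto y_3$. Then $a''=1$ and $a'=0$, so the claimed bound is $2^{\,1+0+0+1/2}=2^{3/2}<3$; yet $f^\varphi=y_3 + y_1y_2z_2 + y_2z_1z_2$ consists of three monomials sitting in three distinct rows and three distinct columns of $M_{f^\varphi}$, whence $\rank(M_{f^\varphi})=3$. (Placing one more type-$D$ gate on top makes the variable count even without changing $\Phi$ or the rank.) The mechanism is exactly what your two bits were meant to track: a type-$D$ multiplication by a variable from the \emph{opposite} side destroys the property ``$1$ lies in the row/column space'', so the subsequent type-$C$ addition genuinely costs a full $+1$ in the rank, not the $+\tfrac12$ budgeted by the term $c/2$. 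Hence the step that would fail in your plan is the very closure you anticipated: there are small-$\Phi$ configurations in which no absorption occurs and the rank strictly exceeds $2^{\Phi}$. Your induction scheme is the right vehicle, but the target exponent has to be weakened before it can be reached.
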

\begin{proof}
Observe that for any type-$D$ gate $g=h\times x$,  $\rank(M_{g^\varphi}) = \rank(M_{{(x\cdot h)}^\varphi}) =\rank(M_{h^\varphi})$, and hence type-$D$ gates do not contribute to the rank.

The proof is by induction on the structure of $F$. 
Base case is when $F$ is of depth 1. 
Let $r$ be the root gate of $F$ computing the polynomial $f$. Then 
\begin{itemize}
\item $r$ is an type-$A$ gate with children $x_1,x_2$ :
 $f = x_1 + x_2$. For any $\varphi$, we have $\rank(M_{f^\varphi})\leq 2 $. Then $a=1,b=0,c=0$. Therefore either $a'=1$ or $a''=1$. In either case, $\rank(M_{f^\varphi})\leq 2^{a''+\frac{a'}{2}+{b}} $.  
\item $r$ is a type-$B$ gate with children $x_1,x_2$ :
 $f = x_1 \cdot x_2$. For any $\varphi$ we have $\rank(M_{f^\varphi})\leq 1 $. Then $a=0,b=1,c=0$. Therefore $\rank(M_{f^\varphi})\leq  2^{a''+\frac{a'}{2}+{b} +\frac{c}{2}} $. 
\end{itemize}
For the induction step, we have the following cases based on the structure of $f$.
\begin{itemize}
\item $r$ is a type-$C$ gate with children $x, h$, i.e.,  $f = h + x$. For any $\varphi$, we have by sub-additivity 
 $\rank(M_{f^\varphi})\leq  \rank(M_{h^\varphi}) + \rank(M_{x^\varphi})$. Let $a_h',a_h''$ be the number of type-$A$ gates in the sub-formula rooted at $h$ that achieve rank-1 and rank-2 under $\varphi$ respectively. Let $b_h,c_h$ be the number of type-$B$ and $c$ type-$C$ gates in the sub-formula rooted at $h$. We now have $a'=a_h',a''=a_h'',b=b_h, c=c_h+1$, and
$\rank(M_{f^\varphi}) \leq  \rank(M_{h^\varphi}) + \rank(M_{x^\varphi}).
$
By Induction hypothesis $ \rank(M_{h^\varphi})\leq  2^{a_h''+\frac{a_h'}{2}+b_h+\frac{c_h}{2}}$. First suppose the case when  $a_h''+\frac{a_h'}{2}+b_h+\frac{c_h}{2} \ge 1.5$, then, $ 
\rank(M_{f^\varphi}) \leq  2^{a_h''+\frac{a_h'}{2}+b_h+\frac{c_h}{2}} + \rank(M_{x^\varphi}) = 2^{a_h''+\frac{a_h'}{2}+b_h+\frac{c_h}{2}} + 1
\leq 2^{a''+\frac{a'}{2}+{b}+ \frac{c}{2}}$.
Now suppose $a_h''+\frac{a_h'}{2}+b_h+\frac{c_h}{2} <1.5$ and hence $ a_h''+\frac{a_h'}{2}+b_h+\frac{c_h}{2} \le 1$ (since $a_h'', a_h, b_h$ and $c_h$ are integers).  If $a_h'' =1$, then $rank(M_{f^\varphi}) =2 <  2^{a''+\frac{a'}{2}+\frac{b}{2}+ \frac{c}{2}}$. Finally, if $a_h'' =0$,  for all of the remaining  possibilities, we have $\rank(M_{f^{\varphi}}) \le 2 \le 2^{a''+\frac{a'}{2}+b+ \frac{c}{2}}$.
\item $r= g\times h $ be an internal gate. For $H\in \{g,h \}$, let  $a_H',a_H''$ be the number of type-$A$ gates that achieve rank-1 and rank-2 under $\varphi$ respectively and  $b_H,c_H$ be the number of type-$B$ and $c$ type-$C$ gates in the sub-formula rooted at $H$. 
Then $f=g*h$ where $*\in\{+,\times\}$. In either case, 
$\rank(M_{f^\varphi}) \leq \rank(M_{g^\varphi})\cdot \rank(M_{h^\varphi})$, and from Induction hypothesis $
\rank(M_{f^\varphi}) \leq \cdot 2^{a_g''+\frac{a_g'}{2}+{b_g} + \frac{c_g}{2}} 2^{a_h''+\frac{a_h'}{2}+b_h+ \frac{c_h}{2}}.$
Since $a'=a_g'+a_h',a''=a_g''+a_h'',b=b_h+b_g,c=c_g+c_h$ we have 
$\rank(M_{f^\varphi})\leq  2^{a''+\frac{a'}{2}+b+ \frac{c}{2}}.$
\end{itemize}
\vspace{-4mm}
\end{proof}

\begin{lemma}
\label{lem:a-type}
Let $F$ be a $\rof$ and $\varphi\sim\mathcal{D}'$. Let $a'$ be the number type-$A$ gates that achieve $\rank$-$1$ under $\varphi$. Then,
$\Pr_{\varphi\sim\mathcal{D}'}\left[\frac{2}{5}a \le a' \le \frac{3}{5} a \right] \ge 1- 2^{-\Omega(a)}.$
\end{lemma}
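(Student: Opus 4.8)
The plan is to recognise $a'$ as a sum of $a$ independent $\{0,1\}$-indicator variables, each with mean $1/2$, and then to finish with a Chernoff bound. First I would determine exactly when a type-$A$ gate is rank-$1$ under $\varphi$. Such a gate has the form $g = x_i + x_j$ with $x_i \neq x_j$, distinctness being forced since $F$ is an $\rof$. If $\varphi(x_i)$ and $\varphi(x_j)$ lie on the same side (both in $Y$, or both in $Z$), then $g^\varphi$ is a nonzero polynomial in the $Y$-variables alone (resp.\ the $Z$-variables alone), so $M_{g^\varphi}$ has rank $1$. If they lie on opposite sides, then $g^\varphi = y + z$ for some $y \in Y$ and $z \in Z$, and the $2 \times 2$ submatrix of $M_{g^\varphi}$ whose rows are indexed by the monomials $1, y$ and whose columns by $1, z$ has determinant $-1$, so the rank is $2$; this is exactly the dichotomy already invoked in the base case of Lemma~\ref{lem:rank-upper bound}. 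Under $\mathcal{D}'$ the sides of $\varphi(x_i)$ and $\varphi(x_j)$ are independent fair coin flips, so each individual type-$A$ gate is rank-$1$ with probability exactly $1/2$.

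Next I would enumerate the type-$A$ gates as $g_1, \dots, g_a$ and set $\chi_k = 1$ if $g_k$ is rank-$1$ under $\varphi$ and $\chi_k = 0$ otherwise, so that $a' = \sum_{k=1}^{a} \chi_k$. Because $F$ is an $\rof$, the pairs of variables feeding distinct type-$A$ gates are pairwise disjoint; hence $\chi_1, \dots, \chi_a$ are functions of disjoint blocks of the independent coordinates of $\varphi$, and are therefore \emph{mutually} independent Bernoulli$(1/2)$ random variables with expected sum $a/2$. A Chernoff (or Hoeffding) bound then yields $\Pr\!\left[\,|a' - a/2| > a/10\,\right] \le 2\exp(-a/50) = 2^{-\Omega(a)}$, which is precisely the claim $\Pr_{\varphi \sim \mathcal{D}'}\!\left[\tfrac{2}{5}a \le a' \le \tfrac{3}{5}a\right] \ge 1 - 2^{-\Omega(a)}$.

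I do not expect a genuine obstacle. The two points that need care are the explicit rank computation for type-$A$ gates — in particular exhibiting the nonsingular $2 \times 2$ minor in the opposite-sides case — and the observation that read-onceness upgrades pairwise independence of the $\chi_k$ to full mutual independence, which is what licenses the exponential concentration.
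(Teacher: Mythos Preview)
Your proposal is correct and follows essentially the same approach as the paper: compute that each type-$A$ gate attains rank $1$ with probability $1/2$ (both children landing on the same side), note that $\mathbb{E}[a']=a/2$, and conclude via a Chernoff bound. You are in fact more careful than the paper's own proof, which omits the explicit justification of mutual independence of the indicators (via the disjoint variable pairs forced by read-onceness) that you rightly supply before invoking Chernoff.
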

\begin{proof}
Let $v$ be a type-$A$ gate in $F$. Then $f_v=x_i+x_j$ for some $i,j\in[N]$. Then $\Pr[ \rank(M_{f_{v}^\varphi})=1] = \Pr[(\varphi(x_i), \varphi(x_j)\in Z ) \vee (\varphi(x_i), \varphi(x_j)\in Y)] = \frac{1}{2}$. Therefore,
$\mathbb{E}[a'] = a/2$. Applying Theorem~\ref{chernoff} (2) and (3)  with $\delta=1/2$, we get the required bounds for $a'$.
\end{proof}
\begin{lemma}
\label{lem:rank-general}
Let $F$ be a $\rof$ computing and $\rop$ $f$ $2n$ variables,  and $\varphi\sim {\cal D}'$. Then with probability at least $1- 2^{-\Omega(n)}$,
$\rank(M_{f^{\varphi}})\le 2^{n-\frac{n}{5\log n}}$.
\end{lemma}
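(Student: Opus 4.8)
The plan is to combine Lemma~\ref{lem:rank-upper bound} with the counting identity $2a+2b+c+d=2n$ and $a=a'+a''$. Since $b+\tfrac c2=\tfrac{2b+c}{2}=n-a-\tfrac d2$ and $a''+\tfrac{a'}2-a=-\tfrac{a'}2$, the exponent in Lemma~\ref{lem:rank-upper bound} simplifies, giving the cleaner bound
\[
\rank(M_{f^\varphi})\ \le\ 2^{\,n-\frac{a'+d}{2}}.
\]
Hence it suffices to show that, with probability $1-2^{-\Omega(n)}$, either $a'+d\ge \frac{2n}{5\log n}$, or — in the regime where this cannot be guaranteed — to bound $\rank(M_{f^\varphi})$ directly. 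I would then case on the profile $(a,b,c,d)$ of $F$.

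First, if $a\ge \varepsilon n$ for a fixed small constant $\varepsilon$: Lemma~\ref{lem:a-type} gives $a'\ge \tfrac{2a}{5}\ge \tfrac{2\varepsilon n}{5}\ge \tfrac{2n}{5\log n}$ for all large $n$, with failure probability $2^{-\Omega(a)}=2^{-\Omega(n)}$; plugging into the display finishes this case. Second, if $a<\varepsilon n$ but $d\ge \tfrac{2n}{5\log n}$: then $a'+d\ge d\ge \tfrac{2n}{5\log n}$ for \emph{every} $\varphi$, so there is nothing random to prove (type-$D$ gates never change the rank, but each still consumes one of the $2n$ variable slots, which is all that is used here).

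The remaining regime — $a<\varepsilon n$ and $d<\tfrac{2n}{5\log n}$, so $2b+c>2n(1-\varepsilon)-\tfrac{2n}{5\log n}$ and type-$B$/type-$C$ gates carry essentially all $2n$ variable slots — is the heart of the matter, and here Lemma~\ref{lem:rank-upper bound} is useless (its bound degrades to $\approx 2^{n-d/2}\approx 2^n$). One must argue below Lemma~\ref{lem:rank-upper bound}, using three structural facts. (i) A type-$B$ gate $x_i\cdot x_j$ — indeed any rank-$1$ subformula — has $\rank(M_{(\cdot)^\varphi})=1$ and contributes nothing through a $\times$-parent, so it can be contracted away; after contracting all type-$B$ subformulas and all variable children of $\times$-gates, the residual read-once formula uses only the $2a+c$ variables at type-$A$ and type-$C$ gates and has the same $\rank(M_{(\cdot)^\varphi})$. (ii) A chain of $+$-gates only \emph{adds} to the rank ($\rank(M_{(h+x)^\varphi})\le\rank(M_{h^\varphi})+1$), hence contributes only logarithmically, whereas a $+$-gate genuinely \emph{doubles} the rank only when it sits inside a product of two high-rank subformulas — this is precisely the slack forcing the fractional exponents $\tfrac{a'}2,\tfrac c2$ in Lemma~\ref{lem:rank-upper bound} and breaking any naive ``$2^{\text{combinatorial quantity}}$'' bound. (iii) A $+$-gate whose subtree contains $m$ variables ``collapses'' (stays rank $1$) exactly when those $m$ variables all land in one block, an event of probability $2^{1-m}$; the case $m=2$ is exactly Lemma~\ref{lem:a-type}. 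Combining these, one shows a bound of the form $\rank(M_{f^\varphi})\le 2^{\,\#\{\text{splitting, product-separated }+\text{-blocks}\}}\cdot\mathrm{poly}(n)$; the ``large'' such blocks — those with $\ge 5\log n$ variables — are vertex-disjoint, so there are at most $2n/(5\log n)$ of them, contributing only $2^{O(n/\log n)}$, and one tunes the small/large threshold and applies Theorem~\ref{chernoff} to the collapse events of a maximal antichain of ``small'' blocks to kill a constant fraction of them whp. The factor $\tfrac{1}{5\log n}$ in the statement is exactly the loss paid in this trade-off between ``small enough to collapse with good probability'' and ``few enough when large''.

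The main obstacle is precisely this last regime: making the refined rank recursion rigorous so that the additive $+1$'s are correctly amortized, and arranging enough (conditionally) independent collapse events to obtain $2^{-\Omega(n)}$ confidence rather than $2^{-\Omega(n/\log n)}$. Everything else — the algebraic reduction to $a'+d$ and the two easy regimes — is routine bookkeeping on the profile $(a,b,c,d)$.
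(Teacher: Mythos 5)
Your algebraic reduction — rewriting the exponent of Lemma~\ref{lem:rank-upper bound} as $n-\frac{a'+d}{2}$ via $2a+2b+c+d=2n$ and $a=a'+a''$ — is correct, and your first regime ($a\ge\varepsilon n$, apply Lemma~\ref{lem:a-type}) matches the paper's treatment of the sub-case $a\ge n/\log n$; your second regime ($d$ large) is a sound, if unneeded, observation. The third regime, however, is a genuine gap, and you concede as much: the sketch about ``splitting, product-separated $+$-blocks,'' antichains, and a tuned threshold is a research plan, not a proof, and several of its steps are already shaky. In particular step (i) is not sound: a type-$B$ gate whose parent is a $+$-gate cannot be ``contracted away'' — it adds $1$ to the rank — so the claim that the residual formula on $2a+c$ variables has the same rank is false, and the whole amortization you describe would have to absorb those additive $+1$'s. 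The paper sidesteps all of this. It does not case on $(a,d)$ but on whether $a+c\ge 2n/\log n$. In the complementary regime $a+c<2n/\log n$ it invokes a separate, completely deterministic bound $\rank(M_{f^\varphi})\le 2^{a+c+b/2}$, valid for \emph{every} $\varphi$: type-$A$ and type-$C$ gates each at worst double the rank, type-$B$ gates have rank $1$ (the $b/2$ is exactly the slack that absorbs their additive contributions under $+$-parents), and type-$D$ and purely internal gates add no factor. Since always $b\le n$, this gives $\rank\le 2^{n/2+2n/\log n}\le 2^{n-n/(5\log n)}$ with no probability at all, and the regime you flagged as ``the heart of the matter'' dissolves. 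The structural point you missed is that randomness is only needed when $a$ is large; when $a+c$ is small the bound is deterministic, and the paper's case split on $a+c$ (rather than on $a$ and $d$ separately) is what makes the two available tools — Lemma~\ref{lem:rank-upper bound} plus Lemma~\ref{lem:a-type} on one side, the deterministic $2^{a+c+b/2}$ bound on the other — cover everything.
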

\begin{proof} 
Consider the following two cases: \\
\noindent\textbf{Case 1 : $a+c \geq \frac{2n}{\log n}$.} Then either $a\geq \frac{n}{\log n}$ or $c \geq \frac{n}{\log n}$.\\ Firstly, suppose $a\geq \frac{n}{\log n}$, then by Lemma $\ref{lem:rank-upper bound}$, we have
$\rank(M_{f^\varphi})\le 2^{a'' +a'/2 +b +c/2}$. 
Since $2a'+2a''+2b+ c + d = 2n$, we have  $a'/2+a''+b+ c/2 \leq n-a'/2$. By Lemma~\ref{lem:a-type}, $a'\ge 2/5 a \ge 2n/5\log n$. Therefore, $\rank(M_{f\varphi})\le 2^{a'' +a'/2 +b +c/2}  \leq 2^{n-a'/2} \leq  2^{n- \frac{n}{5\log n}}$. \\
Now suppose $c\geq \frac{n}{5\log n}$. Since  $2a'+2a''+2b+ c \leq 2n$, we have $a''+a'+ b +c/2 \le n-c/2 \le n-n/2\log n$. Therefore by Lemma~\ref{lem:rank-upper bound}, 
$\rank(M_{f^\varphi})\le 2^{a'' +a'/2 +b +c/2} \leq 2^{a'' +a' +b +c/2} \leq 2^{n-c/2} \leq 2^{n- \frac{n}{5\log n}}.$
%
%
\\ \noindent\textbf{Case 2 : $a+c < \frac{2n}{\log n}$.} Observe that $b\le n$. Since any type $B$ gate achieves rank 1 under any $\varphi$, by  a simple inductive argument we have $\rank(M_{f^{\varphi}}) \le 2^{a+c+b/2}$ for any  $\varphi$. Therefore 
$\rank(M_{f^\varphi}) \leq 2^{a+c+b/2 } \leq 2^{n/2+2n/\log n} \le 2^{n-n/5\log n}. $
%
%
%
%
%
\end{proof}
The following  polynomial  was introduced by Raz and Yehudayoff~\cite{RY08}. 
\begin{defn}
\label{def:raz-poly}
Let $n\in\mathbb{N}$ be an integer. Let $X=\{x_1,\ldots, x_{2n}\}$ and $\mathcal{W} = \{w_{i,k,j} \}_{i,k,j\in[2n]}$. For any two integers $i,j\in\mathbb{N}$, we define an interval $[i,j] = \{ k\in\mathbb{N}, i\leq k\leq j \}$. Let $|[i,j]|$ be the length of the interval $[i,j]$. Let $X_{i,j} = \{ x_p \mid p\in [i,j]\} $ and $W_{i,j}=\{ w_{i',k,j'}\mid i',k,j'\in[i,j] \}$. For every $[i,j]$ such that $|[i,j]|$ is even we define a polynomial $g_{i,j}\in\mathbb{F}[X,\mathcal{W}]$ as
 $g_{i,j}=1$ when   $|[i,j]|=0$  and 
 if $|[i,j]|>0$ then, {\small $g_{i,j }\triangleq (1+x_ix_j)g_{i+1,j-1} + \sum_{k}w_{i,k,j}g_{i,k}g_{k+1,j}.$}
where $x_k$, $w_{i,k,j}$ are distinct variables, $1\le k\le j$ and the summation is over  $k\in [i+1,j-2]$ such that the interval $[i,k]$ is of even length.
Let $g\triangleq g_{1,2n}.$
\end{defn}

In the following, we view $g$ as polynomial in $\{x_1,\ldots, x_{2n}\}$ with coefficients from the rational function field $\mathbb{G} \triangleq \mathbb{F}(\mathcal{W})$.
\begin{lemma}
\label{lem:rank-interval}
Let  Let $X=\{x_1,\ldots, x_{2n}\}$, $Y=\{y_1,\ldots, y_{2n}\}$, $Z=\{z_1,\ldots, z_{2n}\}$ and $\mathcal{W} = \{w_{i,k,j} \}_{i,k,j\in[2n]}$ be sets of variables. Suppose $\varphi\sim\mathcal{D}'$ such that $| |\varphi(X)\cap Y|-|\varphi(X)\cap Z| | = \ell$. Then for the polynomial $g$ as in Definition $\ref{def:raz-poly}$ we have, 
$\rank(M_{g^{\varphi}}) \ge 2^{n-\ell/2}.$
\end{lemma}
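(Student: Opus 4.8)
The plan is to prove a stronger statement by induction on the interval length: for every even interval $[i,j]$ and every $\varphi$, letting $\ell_{i,j} = \bigl| |\varphi(X_{i,j})\cap Y| - |\varphi(X_{i,j})\cap Z| \bigr|$, we have $\rank(M_{g_{i,j}^\varphi}) \ge 2^{(|[i,j]|/2) - \ell_{i,j}/2}$; applying this with $[i,j]=[1,2n]$ gives the claim since $|[1,2n]|=2n$. The base case $|[i,j]|=0$ is immediate as $g_{i,j}=1$ has rank $1 = 2^0$ and $\ell_{i,j}=0$.

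For the induction step, write $m = |[i,j]|/2$ and expand $g_{i,j} = (1+x_ix_j)g_{i+1,j-1} + \sum_k w_{i,k,j} g_{i,k} g_{k+1,j}$. The key observation is that the variables $w_{i,k,j}$ are \emph{distinct} across the terms in the sum and do not appear in $g_{i+1,j-1}$ or in any of the sub-polynomials, so over the field $\mathbb{G} = \mathbb{F}(\mathcal{W})$ the matrix $M_{g_{i,j}^\varphi}$ is, up to relabelling rows/columns into blocks indexed by which $X$-variables get sent to $Y$ versus $Z$, a "sum of independent pieces" whose rank is at least the rank of any single term: concretely $\rank(M_{g_{i,j}^\varphi}) \ge \rank(M_{((1+x_ix_j)g_{i+1,j-1})^\varphi})$ and also $\ge \rank(M_{(w_{i,k,j}g_{i,k}g_{k+1,j})^\varphi}) = \rank(M_{g_{i,k}^\varphi})\cdot\rank(M_{g_{k+1,j}^\varphi})$ for each valid $k$ (using that the communication matrix of a product on disjoint variable sets is a tensor product, hence its rank is the product of the ranks, and that multiplying by the fresh variable $w_{i,k,j}$ does not change rank). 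I would pick whichever bound is most favorable depending on how $\varphi$ behaves on $\{x_i,x_j\}$.

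The heart of the argument is a case analysis on the images $\varphi(x_i),\varphi(x_j)$. If $\varphi(x_i)$ and $\varphi(x_j)$ lie in different sides (one in $Y$, one in $Z$), then removing both from $X_{i,j}$ leaves $\ell_{i+1,j-1} = \ell_{i,j}$, and I claim $\rank(M_{((1+x_ix_j)g_{i+1,j-1})^\varphi}) \ge 2\cdot\rank(M_{g_{i+1,j-1}^\varphi})$ — because $x_i$ maps to, say, a $y$-variable and $x_j$ to a $z$-variable, the factor $1+x_ix_j$ contributes a rank-$2$ "identity-like" tensor factor on the $\{y,z\}$ coordinate, multiplying the rank. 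Then by the inductive hypothesis $\rank(M_{g_{i+1,j-1}^\varphi}) \ge 2^{(m-1)-\ell_{i+1,j-1}/2}$, giving $\rank(M_{g_{i,j}^\varphi}) \ge 2^{m - \ell_{i,j}/2}$. If instead $\varphi(x_i),\varphi(x_j)$ lie on the same side, then $\ell_{i+1,j-1} = \ell_{i,j}+2$ or $\ell_{i,j}-2$; in the "$-2$" subcase the same bound $\rank \ge \rank(M_{g_{i+1,j-1}^\varphi}) \ge 2^{(m-1)-(\ell_{i,j}-2)/2} = 2^{m-\ell_{i,j}/2}$ closes it, and in the "$+2$" subcase I instead split the interval: choose $k$ so that $[i,k]$ is balanced (or as balanced as possible — the parity/counting forces a split with $\ell_{i,k}+\ell_{k+1,j}\le \ell_{i,j}$, in fact one can show some split achieves $\ell_{i,k}=\ell_{k+1,j}=0$ or keeps the sum of imbalances no larger), and use $\rank(M_{g_{i,j}^\varphi}) \ge \rank(M_{g_{i,k}^\varphi})\rank(M_{g_{k+1,j}^\varphi}) \ge 2^{(|[i,k]|/2 - \ell_{i,k}/2) + (|[k+1,j]|/2 - \ell_{k+1,j}/2)} = 2^{m - (\ell_{i,k}+\ell_{k+1,j})/2} \ge 2^{m-\ell_{i,j}/2}$.

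The main obstacle I anticipate is the combinatorial bookkeeping in the "same side, imbalance increases" subcase: I must verify that among the admissible split points $k\in[i+1,j-2]$ with $[i,k]$ even there is always one for which $\ell_{i,k}+\ell_{k+1,j} \le \ell_{i,j}$ — this should follow from a discrete intermediate-value argument on the prefix imbalance function $p(k) = |\varphi(X_{i,k})\cap Y| - |\varphi(X_{i,k})\cap Z|$, which changes by $\pm 1$ at each step, so it hits $0$ (or the value that balances the two halves) at some admissible even-length prefix — but one has to be careful about the excluded endpoints $k=i$ and $k=j-1,j$ and the even-length constraint. A secondary technical point is justifying rigorously the rank-multiplicativity / rank-superadditivity facts for these block-structured communication matrices over $\mathbb{G}$, i.e. that introducing the distinct $\mathcal{W}$-variables really does make the summands "independent" in the linear-algebraic sense needed; this is where working over the function field $\mathbb{F}(\mathcal{W})$ rather than $\mathbb{F}$ is essential.
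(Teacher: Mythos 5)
Your proposal is essentially the paper's proof: the same three-way case analysis on the images of the two endpoints, with the imbalance-increasing subcase handled by a discrete intermediate-value split into a balanced prefix and a remainder carrying the full imbalance. The only cosmetic difference is that you induct on interval length and prove the general-interval statement directly, whereas the paper inducts on $n+\ell$ and invokes Raz--Yehudayoff's Lemma~4.3 as a black box for the balanced prefix in Case~3 and for the $\ell=0$ base case.
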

\begin{proof}
Proof builds on Lemma~4.3 in~\cite{RY08} as a base case and is by induction on $n+\ell$.

\noindent{\bf Base case:}  Either $\ell = 0 $ or $\ell = 2n$. For $\ell=0$, the statement follows by Lemma~4.3 in~\cite{RY08}. When $\ell=2n$, then  $\rank(M_{g^{\varphi}}) = 1=2^{n-\ell/2}$.

\noindent{\bf Induction step: } Without loss of generality, assume that $|\varphi(X)\cap Y| =  |\varphi(X)\cap Z|+\ell$.  There are three possibilities:
\begin{description}
\item[Case 1 :] Let $\varphi(x_1)\in Y$ and $\varphi(x_{2n})\in Z$ or vice versa. In this case 
\begin{eqnarray*}
\rank(M_{g^{\varphi}}) &\ge& \rank(M_{(1+x_1x_{2n})^\varphi})\rank(M_{g_{2,2n-1}^{\varphi}})= 2\cdot \rank(M_{g_{2,2n-1}^{\varphi}})\\ 
&\ge& 2\cdot 2^{n-1-\ell/2} = 2^{{n-\ell/2}} \hspace*{10 mm}[\text{By Induction Hypothesis.}]
\end{eqnarray*}
\vspace{-4mm}
\item[Case 2 :] $\varphi(x_1)\in Y$ and $\varphi(x_{2n})\in Y$. Then 
\begin{align*}
\rank(M_{g^{\varphi}}) &\ge \rank(M_{(1+x_1x_{2n})^\varphi})\rank(M_{g_{2,2n-1}^{\varphi}}) = 1\cdot \rank(M_{g_{2,2n-1}^{\varphi}})\\
 &\ge 2^{(2n-2)/2- (\ell-2)/2} =2^{n-\ell/2}. \hspace*{10 mm}[\text{By Induction Hypothesis.}]
\end{align*}
For the penultimate inequality above, note that $g_{2,2n-1}$ is defined on $X'=\{x_2,\ldots, x_{2n-1}\}$ and $| |\varphi(X')\cap Y|-|\varphi(X')\cap Z| | = \ell-2$ and hence by Induction Hypothesis, $\rank(M_{g_{2,2n-1}^{\varphi}}) \ge 2^{(2n-2)/2-(\ell-2)/2}$. 
\item[Case 3] $\varphi(x_1)\in Z$ and $\varphi(x_{2n})\in Z$. Then there is an $i\in \{2,2n-1\}$ such that $ | |\varphi(X_i)\cap Y|-|\varphi(X_i)\cap Z| | = 0$ and  $ | |\varphi(X\setminus X_i)\cap Y|-|\varphi(X\setminus X_i)\cap Z| | = \ell$, where $X_i =\{x_1,\ldots, x_i\}$.
Then by the definition of $g$, over $\mathbb{G}$,
$\rank(M_{g^\varphi}) \ge \rank(M_{g_{1,i}^\varphi})\cdot \rank(M_{g_{i+1,2n}^\varphi}) \ge 2^{i/2}\cdot 2^{(2n-i)/2 -\ell/2} =2^{n-\ell/2},
$
since $\rank(M_{g_{1,i}^\varphi}) =2^{i/2}$ by Lemma 4.3 in~\cite{RY08}, and $\rank(M_{g_{i+1,2n}^\varphi})\ge 2^{(2n-i)/2 -\ell/2}$ by Induction Hypothesis.
\end{description}
\vspace{-4mm}
\end{proof}
\begin{lemma}
\label{lem:partition-chernoff}
$\Pr_{\varphi\sim {\cal D}'} [ n -n^{2/3}\le  |\varphi(X)\cap Y| \le n +n^{2/3}] \ge 1- 2^{-\Omega(n^{1/3})}$.
\end{lemma}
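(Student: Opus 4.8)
The plan is to observe that $|\varphi(X)\cap Y|$ is a sum of $2n$ independent $\{0,1\}$-valued random variables, each equal to $1$ with probability $1/2$, and then to invoke the Chernoff bound (Theorem~\ref{chernoff}) exactly as in the proof of Lemma~\ref{lem:a-type}, but with a smaller deviation parameter.

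First I would set, for each $i\in[2n]$, $\chi_i = 1$ if $\varphi(x_i)\in Y$ and $\chi_i = 0$ otherwise. By the definition of $\mathcal{D}'$ the coordinates $\varphi(x_1),\dots,\varphi(x_{2n})$ are mutually independent, and each event $\varphi(x_i)\in Y$ has probability exactly $1/2$ and is determined by the $i$-th coordinate alone; hence the $\chi_i$ are i.i.d.\ Bernoulli$(1/2)$ and $S \triangleq |\varphi(X)\cap Y| = \sum_{i=1}^{2n}\chi_i$ has $\mathbb{E}[S]=n$.

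Next I would apply Theorem~\ref{chernoff}(2) and (3) with $\mu = n$ and $\delta = n^{-1/3}\in(0,1)$, noting that $(1\pm\delta)\mu = n \pm n^{2/3}$ and $\delta^2\mu = n^{1/3}$. This gives $\Pr[S \ge n+n^{2/3}] \le e^{-n^{1/3}/3}$ and $\Pr[S \le n-n^{2/3}] \le e^{-n^{1/3}/2}$, and a union bound over the two tail events yields
\[
\Pr\big[\, n-n^{2/3} \le |\varphi(X)\cap Y| \le n+n^{2/3} \,\big] \;\ge\; 1 - 2e^{-n^{1/3}/3} \;=\; 1 - 2^{-\Omega(n^{1/3})},
\]
which is the claim.

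There is essentially no obstacle here: the only point worth checking is that $S$ really is a clean binomial random variable, which is immediate because under $\mathcal{D}'$ the images of distinct variables are chosen independently and ``landing in $Y$'' is a probability-$1/2$ event depending only on that one variable. In particular the remark following the definition of $\mathcal{D}'$ --- that $|\varphi(X)\cap Y|$ and $|\varphi(X)\cap Z|$ need not be equal --- does not interfere, since we only track the count of variables sent to $Y$ and make no use of any balance condition.
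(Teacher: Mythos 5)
Your proof is correct and follows the same approach as the paper: both apply the Chernoff bound (Theorem~\ref{chernoff}) with $\delta = n^{-1/3}$ to the binomial random variable $|\varphi(X)\cap Y|$ with mean $n$. You have simply spelled out the routine details (setting up the indicator variables, computing the mean, taking the union bound) that the paper leaves implicit.
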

\begin{proof}
Proof is a simple application of Chernoff's bound 
(Theorem~\ref{chernoff}) with $\delta = 1/n^{1/3}$. 
\end{proof}

\begin{corollary}
\label{cor:Raz-Yehu}
$\Pr_{\varphi\sim {\cal D}'}[\rank(M_{g^\varphi}) \ge 2^{n-n^{2/3}/2}] \ge 1-2^{-\Omega(n^{1/3})}$.
\end{corollary}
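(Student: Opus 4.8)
The plan is to combine Lemma~\ref{lem:rank-interval} with Lemma~\ref{lem:partition-chernoff} in a direct way. The key observation is that, for $\varphi\sim\mathcal{D}'$, the quantity $\ell = \big| |\varphi(X)\cap Y| - |\varphi(X)\cap Z| \big|$ is exactly $|2|\varphi(X)\cap Y| - 2n|$, because $|\varphi(X)\cap Y| + |\varphi(X)\cap Z| = |X| = 2n$. So controlling the deviation of $|\varphi(X)\cap Y|$ from its mean $n$ immediately controls $\ell$.

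First I would invoke Lemma~\ref{lem:partition-chernoff}: with probability at least $1 - 2^{-\Omega(n^{1/3})}$ we have $n - n^{2/3} \le |\varphi(X)\cap Y| \le n + n^{2/3}$. On this event, the identity above gives $\ell = \big|2|\varphi(X)\cap Y| - 2n\big| \le 2n^{2/3}$. Second, I would apply Lemma~\ref{lem:rank-interval}, which says $\rank(M_{g^\varphi}) \ge 2^{n - \ell/2}$ for every $\varphi\sim\mathcal{D}'$. Plugging in $\ell \le 2n^{2/3}$ yields $\rank(M_{g^\varphi}) \ge 2^{n - n^{2/3}} $ on the good event — which is at least the claimed $2^{n - n^{2/3}/2}$ once we note that $2^{n-n^{2/3}} \le 2^{n - n^{2/3}/2}$ is the wrong direction, so I would instead track the bound more carefully: Lemma~\ref{lem:partition-chernoff} as stated gives $\ell \le 2n^{2/3}$, hence $\rank \ge 2^{n - n^{2/3}}$; to land exactly at the statement one should either apply Lemma~\ref{lem:partition-chernoff} with a window of half the width (i.e. $\delta = \tfrac{1}{2n^{1/3}}$, which only changes the hidden constant in $2^{-\Omega(n^{1/3})}$), or simply observe $2^{n-n^{2/3}} \ge 2^{n - n^{2/3}/2}$ is false and that the corollary is really asserting the weaker $2^{n-n^{2/3}/2}$ as a clean round number — so reading Lemma~\ref{lem:partition-chernoff}'s window as $\pm \tfrac12 n^{2/3}$ (equivalently, take $\delta=\tfrac{1}{2}n^{-1/3}$ in Chernoff) makes everything match. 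Finally, the probability of the complement is $2^{-\Omega(n^{1/3})}$, as required, and this is the probability bound in the corollary's statement.

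The only genuinely delicate point is the bookkeeping of constants between the $n^{2/3}$ in Lemma~\ref{lem:partition-chernoff} and the $n^{2/3}/2$ in the corollary; this is cosmetic and is absorbed by re-running the Chernoff estimate of Lemma~\ref{lem:partition-chernoff} with $\delta = \tfrac{1}{2}n^{-1/3}$ instead of $\delta = n^{-1/3}$, which leaves the failure probability $2^{-\Omega(n^{1/3})}$ unchanged. There is no real obstacle: the corollary is a one-line consequence of the preceding two lemmas together with the elementary identity $\ell = |2|\varphi(X)\cap Y| - 2n|$.
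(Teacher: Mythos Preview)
Your approach is exactly the paper's: combine Lemma~\ref{lem:rank-interval} with Lemma~\ref{lem:partition-chernoff} via the identity $\ell = \bigl|\,2|\varphi(X)\cap Y| - 2n\,\bigr|$. The factor-of-two bookkeeping you flag is real --- the paper's one-line proof simply asserts $\ell = n^{2/3}$ without comment --- and your proposed fix (re-running Chernoff with $\delta = \tfrac{1}{2}n^{-1/3}$, which only perturbs the hidden constant in $2^{-\Omega(n^{1/3})}$) is precisely what is needed to reconcile the stated window in Lemma~\ref{lem:partition-chernoff} with the exponent $n - n^{2/3}/2$ in the corollary.
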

\begin{proof}
Apply Lemma~\ref{lem:rank-interval} with $\ell = n/n^{1/3} =n^{2/3}$ and the probability bound follows from Lemma~\ref{lem:partition-chernoff}.
\end{proof}
\paragraph*{Proof of Theorem~\ref{thm:sum-rop}}
\begin{proof}
Suppose $s< 2^{n/10\log n}$. Then by Lemma~\ref{lem:rank-general} and union bound, probability that there is an $i$ such that $\rank(M_{f_{i}^\varphi}) \ge 2^{n-n/5\log n}$ is $s 2^{-\Omega(n)} =2^{-\Omega(n)}$ and hence by Lemma~\ref{sub-additivity}, $\rank(M_{g^\varphi}) \le s 2^{n-n/5\log n} \le 2^{n-n/10\log n}$ with probability $1-2^{-\Omega(n)}$.  However, by Corollary~\ref{cor:Raz-Yehu}, $\rank(M_{g^\varphi}) \ge 2^{n-n^{2/3}/2}> 2^{n-n/10\log n}$ with probability at least $1-2^{-\Omega(n^{1/3})}$, a contradiction. Therefore, $s=2^{\Omega(n/\log n)}$.  
\end{proof}

\section{Sum of Products of $\rop$s}
\subsection{$\rop$s under random partition}

Throughout the  section, let   $m\triangleq N^{1/3}$,  $n\triangleq \sqrt{N}$ and  $\kappa = 20\log n$. Let $\cal{D}$  denote the  distribution on the functions $\varphi: X \to Y\cup Z\cup\{0,1\}$ defined as follows
\begin{center}
$\varphi(x_{ij}) \in  \begin{cases} Y &\mbox{with prob. } \frac{m}{N} \\
Z & \mbox{with prob. } \frac{m}{N} \\
1 & \mbox{with prob. } \frac{\kappa n}{N} \\
0 & \mbox{with prob. } 1- \left(\frac{2m+\kappa n}{N}\right) \end{cases}$
\end{center}

\begin{lemma}
\label{lem:monomial-rank}
Let $f$ be an $\rop$ computed by an $\rof$ $F$ and $\varphi\sim {\cal D}$. Let $X$ be a random variable that denotes the number of non-zero multiplication gates at depth $1$. Then 
$\Pr\limits_{\varphi\sim D}\left[ X> \mathcal{O}(N^{1/6}\log n) \right] \leq 2^{-\Omega(m)}$.
\end{lemma}
\begin{proof}
Consider a multiplication gate $g$  at depth 1, with at least two variables as its input. Let $m$ be the monomial (excluding the coefficient) computed by $g$, note that  $d=\deg(m)\geq 2$. we have, $$
\Pr_{\varphi\sim D}[m^\varphi \neq 0] = \left( \frac{2m+\kappa n}{N} \right)^d  \leq \left( \frac{2m+\kappa n}{N} \right)^2 
 \leq \left(\frac{2\kappa n}{N} \right)^2  
\leq \left(\frac{2\kappa}{n}\right)^2  \leq \mathcal{O}\left(\frac{\kappa^2}{N}\right). 
$$
In the above, we have used the fact that $2m<\kappa n$ for large enough $n$.
Since $F$ is an $\rof$ in $N$ variables , the $\rop$ $f$ computed by $F$ has at most $N/2$  multiplication gates  where both the inputs are variables.  Then, $\mu \triangleq \mathrm{E}[X] \le \frac{N}{2}\cdot \Pr_{\varphi\sim  D}[m^\varphi \neq 0]\leq N\cdot c\left(\frac{\kappa^2}{N}\right) \leq c(\kappa^2)$, \mbox{ where $c$ is a constant.}
By Theorem~\ref{chernoff}, let $\delta=\frac{N^{1/6}}{\log n}>0$, we have 
$$ \Pr\limits_{\varphi\sim D}\left[ X> \left( 1 +\frac{N^{1/6}}{\log n} \right)c\log^2 n\right ] 
\leq \mathrm{e}^{\frac{-cN^{2/6}}{3}} 
\leq 2^{-\frac{2cN^{1/3}}{3}} \leq 2^{-\Omega(m)}. $$
\end{proof}
\begin{lemma}
\label{lem:remove-monom}
Let $F$ be an $\rof$ computing an $\rop$ $f$ and $\varphi\sim {\cal D}$. Then there exists an $\rof$ $F'$ such that every gate in $F'$ at depth-1 is an addition gate, and 
$\rank(M_{F^\varphi}) \leq \rank(M_{F'^\varphi}) \times 2^{\mathcal{O}(N^{1/6}\log n)}$
with probability atleast $1-2^{-\Omega(m)}$.
\end{lemma}
\begin{proof}
Given an arithmetic formula $F$ we construct the formula $F'$ by replacing every multiplication gate $v$ at depth-1 in $F$ by the constant 1.  Let $X$ of product gates of fan-in  $\ge$1 in $F^\varphi$.  
Then, by the construction of $F'$,
$$\rank(M_{F^\varphi}) \leq \rank(M_{F'^\varphi}) \times 2^{X}.$$
Now by Lemma~\ref{lem:monomial-rank}, with probability atleast $1-2^{-\Omega(m)}$ we have,
$$\rank(M_{F^\varphi}) \leq \rank(M_{F'^\varphi}) \times 2^{\mathcal{O}(N^{1/6}\log n)}.$$
\end{proof}



Recall that an arithmetic formula $F$ over $\mathbb{Z}$ is said to be monotone if it does not have any node  labelled by a negative constant. 





\begin{lemma}
\label{formula-to-monotone}
Let $F$ be an \rof, and $\varphi\sim{\cal D}$. Then there exists a monotone formula $G$ such that $\rank(M_{F^{\varphi}})\le \valu(G^\varphi).$
\end{lemma}
\begin{proof}
Let $F$ be an constant-minimal \rof, and $\varphi\sim {\cal D}$. Let $G^{\varphi}$ be a monotone formula obtained from $F^{\varphi}$ as follows: \\
By short circuiting the gates if necessary,  every leaf node $v$ labelled by a constant is replaced by $1$.   
For every gate $v$ in $F^{\varphi}$ with at least one leaf as a child, 
\begin{itemize}
\item If $v = \prod_{j=1}^k v_j$, with $v_1,\ldots, v_i$, $i\ge 1$ are non-constant leaf gates, then  replace the gates $v_1 \times v_2\times \ldots \times v_i$ by the rank of the polynomial computed by $\varphi(v_1 \times v_2\times \ldots \times  v_i)$.
\item Similarly, if $v = \sum_{j=1}^k v_j$, with $v_1,\ldots, v_i$, $i\ge 1$ are non-constant leaf gates, then  replace the gates $v_1 + v_2+\ldots  + v_i$ by the rank of the polynomial computed by $\varphi(v_1 + v_2 + \ldots + v_i)$.
\end{itemize}
Clearly, the formula constructed above is monotone, since negative constants (if any) in $F^\varphi$ have been replaced by $1$. 
Then, by Lemmas~\ref{sub-additivity} and \ref{sub-multiplicativity}, we have for any $\varphi$, $\rank(M_{F^{\varphi}})\le \valu(G^\varphi)$.
\end{proof}

%

\begin{obs}
\label{obs:rank-value}
 Let $F$ be an $\rof$ and $\varphi\sim {\cal D}$. By Lemma \ref{formula-to-monotone}, we have,
$\Pr[\rank(M_{F^\varphi})>2^r]\leq \Pr[\valu(G^\varphi)>2^r]. $
\end{obs}

\begin{defn}
Let $F$ be an $\rof$ and $\varphi\sim {\cal D}$. A gate $u$ in $F^{\varphi}$ is called a $\sep$, if  either $u$ is a leaf with $\rank(M_{u^{\varphi}}) = 2$, or $u=u_1+u_2$ with $\rank(M_{u_{1}^{\varphi}})=\rank(M_{u_{2}^{\varphi}})=1$ and $\rank(M_{u_{}^{\varphi}})=2$.
\end{defn}

\begin{corollary}
\label{cor:witness}
Let $F$ be an \rof and $\varphi\sim {\cal D}$. Then by Lemma $\ref{lem:rank-2nodes}$ we have
\begin{equation}
\nonumber\resizebox{.9\hsize}{!}{$\Pr[\rank(M_{F^\varphi})>2^r]\leq \Pr[\exists~ u_1,\ldots,u_{\frac{r}{\log N}} \in F^\varphi~ s.t.~ \forall 1\leq i\leq \frac{r}{\log N}~u_i~is~a~\sep ]$}
\end{equation}
\end{corollary}

Now all we need to do is to estimate the probability that a given set of nodes $u_1,\ldots,u_t$ where $t>\frac{r}{\log N}$ are a set of $\sep$s. 

Let $u_i = u_{i,1}+u_{i,2}$ be a $\sep$ in $F^{\varphi}$ and $\rank(M_{u_{i}^\varphi})=2$. Consider the sub-formula rooted at $u_i$. Note that $\rank(M_{u_{i}^\varphi})=2$  only if $\var(u_{i}^\varphi) \cap Y \neq \emptyset $ and  $\var(u_{i}^\varphi) \cap Z \neq \emptyset $. By simple applications of Chernoff's bound, we show that only a small number of $u_1,\ldots, u_t$ can achieve rank-$2$ under a random $\varphi \sim{ \mathcal{D}}$. Let 
$\ell_{i_1},\ldots,\ell_{i_r}$ be the addition gates at depth-1 in the
sub-formula rooted at $u_i$.  For $1\leq i \leq t$ we define $S_i \triangleq \var(\ell_{i_1})\cup \cdots\cup \var(\ell_{i_r})$. Let $v_1,\ldots,v_p$ be the addition gates at depth-1 in $F^\varphi$ that are
not contained in any of the sub-formulas rooted at $u_1,\ldots,u_t$. For $1\leq j \leq p$,  let $S_{t+j}=\var(v_j)$, also let $q = t+p$.


Note that $|S_i| \le s_F \le N^{1/2 + \lambda}$. By merging sets in a greedy fashion whenever  necessary, we assume that $|S_i| \in [N^{1/2 + \lambda}, 2N^{1/2 + \lambda}]$. 
Therefore $q\le N^{1/2 - \lambda}$.

%
%

For $S\subseteq X$ and $\varphi\sim {\cal D}$ let  $S^\varphi \triangleq\{ \varphi(x)\mid x\in S \}$. Let $W=Y \cup Z$.
We define the following random variables.
\begin{align*}
 X_2 &= \{ S_i\mid 1\leq i \leq q,~ |S_i^\varphi\cap W|= 2 ,~ |S_i^\varphi\cap Y|=1, ~|S_i^\varphi\cap Z|=1 \}.\\
 X_3 &= \{ S_i\mid 1\leq i \leq q,~|S_i^\varphi\cap W|= 3 ,~ |S_i^\varphi\cap Y|\neq \phi, ~|S_i^\varphi\cap Z|\neq \phi \}.\\
X_4 &= \{ S_i\mid 1\leq i \leq q,~|S_i^\varphi\cap W|= 4 ,~ |S_i^\varphi\cap Y|=2, ~|S_i^\varphi\cap Z|=2 \}.\\
X_5 &= \{ S_i\mid 1\leq i \leq q,~|S_i^\varphi\cap W|= 5 ,~ |S_i^\varphi\cap Y|=3 \mbox { and }|S_i^\varphi\cap Z|=2 \mbox{ or vice versa} \}.\\
X_{\geq 6} &= \{ S_i\mid 1\leq i \leq q,~|S_i^\varphi\cap W|\geq 6 ,~ |S_i^\varphi\cap Y|\geq 3, ~|S_i^\varphi\cap Z|\ge 3 \}.
\end{align*}
 Then we have,
\begin{lemma}
\label{lem:chernoff-probabilities}
With the notations as above, 
\begin{itemize}
\item[(1)] $\Pr[ |X_2| + |X_3|+|X_4| +|X_5| \ge 4N^{4/15} ] \le 2^{-\Omega(m)} $; and 
\item[(2)] $(2)\Pr[|X_{\ge 6}| \ge 1  ] \le 2^{-\Omega(m)}  $.
\end{itemize}
\end{lemma}
\begin{proof}
We argue that $\Pr[|X_2| \ge N^{1/5}] \le 2^{-\Omega(m)}$, the argument for the case of $\Pr[|X_3|  \ge N^{1/5}]$, $\Pr[|X_4|  \ge N^{1/5}]$ and $\Pr[|X_5|  \ge N^{1/5}]$ are similar and the result follows by a simple union bound. \\
Let $\mu_2 = \mathbb{E}[|X_2|]= \sum\limits_{i=1}^{q} \frac{|S_i|(|S_i|-1)}{2} \left(\frac{m}{N}\right)^2 \left(1-\frac{m}{N}\right)^{|S_i|-2}$. Since $\lambda\leq \frac{1}{30},q \leq N^{1/2 -\lambda}$ and $|S_i| \in [N^{1/2+\lambda},2N^{1/2+\lambda}]$, we have $\mu_2 = O(N^{1/5})$.
Applying Theorem~\ref{chernoff} with $\delta = \sqrt{\frac{m}{\mu_2}}-1$ we get
$\Pr[|X_2| \ge N^{4/15}] \le 2^{-\Omega(m)}$. 
With a similar argument we get $\Pr[|X_i| \ge N^{4/15}] \le 2^{-\Omega(m)}$ for $i\in\{3,4,5\}$ and (1) follows from union bound.
For (2), we have 
{\small $$\mathbb{E}[|X_{\ge 6}] \le \sum_{i=1}^{q} |S_i|(|S_i|-1)(|S_i|-2)(|S_i|-3)(|S_i|-4) (|S_i|-5)(m/N)^6 (1-m/N)^{|S_i|-6} \le  2^6 N^{-1/2 + 5\lambda}. $$}
Then if $\lambda \leq \frac{1}{30}$, setting $\delta= 1/\mu -1$ in Theorem~\ref{chernoff}, we get    
 $\Pr[|X_{\ge 6}| \ge 1  ] \le 2^{-\Omega(m)}$ as required. \end{proof}

\begin{lemma}
\label{lem:upperbound-sep}
The number of $\sep$ among $u_1,\ldots, u_t$ is at most  $O(N^{4/15})$
with probability at least $1-2^{-\Omega(m)}$.
\end{lemma}
\begin{proof}
Firstly, we  show that  with probability atleast $1-2^{-\Omega(m)}$  among $u_1,\ldots, u_t$ the number of $\sep$s is upper bounded by $|X_2| + |X_3| + 2 (|X_4| + |X_5|),$ which proves the lemma as an immediate consequence.
Note that the sets $X_2, X_3, X_4, X_5$ and $X_{\ge 6}$ are disjoint. Any $S_i\in X_2$ has  exactly one variable each from $Y$ and $Z$, and hence each such $S_i$ can cause at most one of the $u_j$'s to be a \sep. Similarly, $S_i\in X_3$ can also cause at most one of the $u_j$'s to be a \sep. However, an $S_i \in X_4$, can result in at most two of the  gates $u_1,\ldots ,u_q$ being $\sep$s,  since $S_i$ could have been a result of merging two or more linear forms.   Now the bound follows from Lemma~\ref{lem:chernoff-probabilities}.
\end{proof}

\begin{lemma}
\label{lem:rank-sum-product}
Let $f$ be an \rop on $N$ variables  computed by an \rof~  $F$, with $s_F\le  N^{1/2 + \lambda}$ for some $\lambda \le 1/30$. Then,
$\Pr_{\varphi\sim {\cal D}}[\rank(M_{f^{\varphi}})\ge 2^{N^{4/15}}] \le 2^{-\Omega(m)}.$
\end{lemma}
\begin{proof}
By Corollary~\ref{cor:witness} and,  we have
{\small \begin{eqnarray*}
\Pr[\rank(M_{f^{\varphi}})\ge 2^{N^{4/15}}] &\le& \Pr[\exists~\sep s~ u_1,\ldots ,u_{\frac{N^{1/4}}{\log N}}]\\
&\le &{N \choose \frac{N^{1/4}}{\log N}} 2^{-\Omega(m)} \le 2^{-\Omega(m)}; ~\mbox{ by Lemma~\ref{lem:rank-2nodes} and  since } {N \choose\frac{N^{1/4}}{\log N}} = 2^{o(m)}.
\end{eqnarray*} 
}
\end{proof}

\subsection{Polynomials with High Rank}
In this section, we prove rank lower bounds for two polynomials under a random partition $\varphi\sim {\cal D}$. 
{The first one is in $\VP$ and the other one is in $\VNP$.}
\begin{lemma}
\label{lem:prod-linear}
Let $p_{lin}=\ell_1\cdots\ell_{m'}$ where {\small  $\ell_j=\left(\sum_{i=(j-1)(N/2m)+1}^{jN/2m} x_i\right)+1$}, where $m'=2m$. Then,
$\rank(M_{{p_{lin}}^\varphi}) = 2^{\Omega(m)}$ with probability $1-2^{-\Omega(m)}$.
\end{lemma}
\begin{proof}
Let $p_{lin}=\ell_1\cdots\ell_{m'}$ where $\ell_j=\left(\sum\limits_{(j-1)(N/2m)+1}^{jN/2m} x_i\right)+1$ and $m'=2m$.

Let us define indicator random variables $\rho_1,\rho_2,\ldots,\rho_{m'}$. 
$$\rho_i =  \begin{cases} 1 &\mbox{if $\rank(M_{\ell_i^\varphi})=2$} \\
0 & \mbox{otherwise } \end{cases}$$
Observe that for any $1 \leq i \leq m', \rank(M_{\ell_i^\varphi})=2$ iff $\ell_{i}^\varphi \cap Y\neq \emptyset \mbox{ and } \ell_{i}^\varphi \cap Z\neq \emptyset$. Therefore, $\Pr[\rank(M_{\ell_i^\varphi})=2] =  \Pr[\ell_{i}^\varphi \cap Y\neq \emptyset \mbox{ and } \ell_{i}^\varphi \cap Z\neq \emptyset]$. For any $1\le j\le m'$, 
$ \Pr[ \ell_{j}^\varphi \cap Y\neq \emptyset \mbox{ and } \ell_{j}^\varphi \cap Z\neq \emptyset] \geq \frac{N}{2m} \left(\frac{N}{2m}-1\right)\left(\frac{m}{N}\right)^2 \left(1-\frac{m}{N}\right)^{\frac{N}{2m}-2}
\ge 1/16~\mbox{ for large enough $N$.} 
$ 
Let $\rho = \sum\limits_{i=1}^{m'} \rho_i$.  Then by linearity of expectation,  $\mu\triangleq \mathbb{E}[\rho] = \sum\limits_{i=1}^{m'} \mathbb{E}[\rho_i] \geq \frac{m}{8}$. By Theorem~\ref{chernoff},  $\Pr[\rho< (1-\delta)\mu] \le e^{-\mu\delta^2/2} =2^{-\Omega(m)}$. Since $\mu \ge m/8$, we have $ \Pr[\rho< (1-\delta)m/8] \le  \Pr[\rho< (1-\delta)\mu] =2^{-\Omega(m)}$. This concludes the proof, by setting $\delta = 1/4$, since $\rank(M_{p_{lin}^\varphi})=2^{\rho}$. 
\end{proof}

Throughout the section let $\varphi$ denote a function of the form  $\varphi: X\to Y\cup Z\cup \{0,1\}$.  Let $X_\varphi$ denote the matrix $(\varphi(x_{ij}))_{1\le i,j\le n}$. If and when $\varphi$ involved in a probability argument, we assume that $\varphi$ is distributed according to $\mathcal{D}$.

\begin{defn} Let $1\le i,j\le n$.  $(i,j)$ is said to be a {\em Y-special } $($respectively {\em Z-special}$)$ if
 $\varphi(x_{ij})\in Y~(\text{respectively~}\varphi(x_{ij})\in Z)$, 
$\forall i'\in[n],i'\neq i~~\varphi(x_{i'j})\in\{0,1\}$ and 
$\forall j'\in[n],j'\neq j~~\varphi(x_{ij'})\in\{0,1\}$.
\end{defn}

%


\begin{lemma}
\label{Ybound}
Let $\Q\in \{Y,Z\}$, $\varphi$ as above and  $\chi = | \varphi(X)\cap \Q|$ where $\varphi(X)=\{\varphi(x_{ij})\}_{i,j\in[n]}$. Then,
$\Pr\limits_{\varphi\sim\cal{D}}\left[\frac{3m}{4} < \chi < \frac{5m}{4}\right] > 1-2^{-\Omega(m)}$.
\end{lemma}
\begin{proof}
Define indicator random variables $\chi_{ij}$ for $1\leq i,j\leq n$:
$$\chi_{ij} = \begin{cases} 1 &\mbox{if $\varphi(x_{ij})\in \Q$}\\
0 & \mbox{otherwise}. \end{cases}$$  

Then $\chi = \sum_{i=1}^n\sum_{j=1}^n\chi_{ij}$ and $\mathop{{}\mathbb{E}}_{\varphi\sim{\mathcal{D}}}[\chi]=m$. 
Let $\delta=\frac{1}{4}$, 
then by Chernoff bounds in Theorem~\ref{chernoff},
$$\Pr\left[\chi \geq \frac{5m}{4} \right] \leq \mathrm{e}^{-\frac{\delta^2\mu}{3}} \leq \mathrm{e}^{-\frac{m}{48}}  = 2^{-\Omega(m)}; \mbox{ and}
\Pr\left[\chi \leq \frac{3m}{4} \right] \leq \mathrm{e}^{-\frac{\delta^2\mu}{2}} \leq \mathrm{e}^{-\frac{m}{32}}=  2^{-\Omega(m)}.$$ Therefore, $\Pr\limits_{\varphi\sim\cal{D}}\left[\frac{3m}{4} < \chi < \frac{5m}{4}\right] = 1-2^{-\Omega(m)}.$
\end{proof}

Let $C_1,\ldots, C_n$ denote the columns of $X_\varphi$ and $R_1,\ldots, R_n$ denote the rows of $X_{\varphi}$.
\begin{defn} Let $\Q\in \{Y,Z\}$. A column $C_j$, $1\le j\le n$ is said to be {\em $\Q$-good} if 
$\exists i\in[n], ~~\varphi(x_{ij})\in \Q$; and 
$\forall i'\in[n],i'\neq i ~~\varphi(x_{i'j})\in\{0,1\}$.
{\em $\Q$-good} rows are defined analogously.
\end{defn}



\begin{obs}
\label{obs:mutex} 
Let $C_i$ be a Y-good column in $X_\varphi$. Let $i,i'\in[n]$, $R$ be the event that $\varphi(x_{ij})\in Y$ and $T$ be the event that $\varphi(x_{i'j})\in Y$. The events $R$ and $T$ are mutually exclusive. 
\end{obs}
 
 By Observation~\ref{obs:mutex} and union bound we have:
\begin{lemma} 
\label{lem:ygood}
For $1\leq i\leq n$, let $C_i$ be a column in $X_\varphi$. Then for any $\Q\in\{Y,Z\}$, 
$\Pr\limits_{\varphi\sim\cal{D}}[\text{$C_i$ is $\Q$-good}] = n\cdot\frac{m}{N}\left(1-\frac{2m}{N}\right)^{n-1}.$
\end{lemma} 
For $\Q\in \{Y,Z\}$  let 
$\eta_{\Q} \triangleq|\{C_i~|~C_i \mbox{ is $\Q$-good}\}|$  \mbox{ and}   $
\zeta_{\Q} \triangleq|\{R_j~|~R_j \mbox{ is $\Q$-good}\}
$.
\begin{lemma} 
\label{Ygoodbound}
With notations as above,  $\forall \Q\in\{Y,Z\}$,
$\Pr\limits_{\varphi\sim \mathcal{D}}[\eta_{\Q} \ge \frac{2m}{3}] \ge 1 - \frac{1}{2^{\Omega(m)}}$;
and $\Pr\limits_{\varphi\sim \mathcal{D}}[\zeta_{\Q} \ge \frac{2m}{3}] \ge 1 - \frac{1}{2^{\Omega(m)}}$.
\end{lemma} 
\begin{proof}
Proof is a simple application for Chernoff's bound. We argue for the case of $\eta_Y$, the rest are analogous. For $1 \leq i \leq n$, let
$$\eta_{i} = \begin{cases} 1 &\mbox{if $C_i$ is Y-good column}\\
0 & \mbox{otherwise}. \end{cases}$$

Then $\eta_Y = \eta_1 + \cdots + \eta_n$ and by Observation~\ref{obs:mutex} and Lemma~\ref{lem:ygood} $
\mathbb{E}[\eta_i] = \Pr[\text{$C_i$ is Y-good}]
= n\cdot\frac{m}{N}\left(1-\frac{2m}{N}\right)^{n-1} $. 
By linearity of expectation,
$
\mathbb{E}[\eta_Y] = n^2\cdot\frac{m}{N}\left(1-\frac{2m}{N}\right)^{n-1} = m\left(1-\frac{2m}{N}\right)^{n-1}$.

Set $\rho= \left(1-\frac{2m}{N}\right)^{n-1}$ so that $\mathbb{E}[\eta_Y]=\rho m$. 
For $\delta = \frac{1}{4}$, we have by Theorem \ref{chernoff},
$$\Pr\left[\eta_Y\leq \left(1-\frac{1}{4}\right)\rho m\right] \leq \mathrm{e}^{\frac{- (1/4)^2\mu}{2}} \leq \mathrm{e}^{-\mu/32}.$$
As $m=o(n)$ and $N=n^2$, $\lim\limits_{n\to \infty}\frac{2m}{N}= 0$. Thus for sufficiently large $n$, $\rho\ge  9/10$ and hence $\mu \ge 9m/10$.  We conclude $
Pr\left[\eta_Y \leq 27m/40\right] \le 2^{-\Omega(m)}.$  Since $27/40>2/3$ we have $\Pr[\eta_Y \ge \frac{2m}{3}] \ge 1 - \frac{1}{2^{\Omega(m)}}$ as required.  
\end{proof}
%
%

\begin{lemma}
\label{lem:special-pos}
For ${\cal Q}\in \{Y,Z\}$, let $\gamma_ \Q$ denote the number of $\Q$-special positions  in $X_\varphi$.  Then $\forall \Q\in \{Y, Z\}$,
$\Pr\limits_{\varphi\sim\cal{D}}\left[\gamma_\Q \geq \frac{m}{12}\right] \geq 1-2^{-\Omega(m)}.$
\end{lemma}
\begin{proof} We argue for $\Q=Y$, the proof is analogous when $\Q=Z$.
Let $\varphi$ be distributed according to $\mathcal{D}$.  Consider the following events  on $X_\varphi$.  
$\mbox{E1} :  2m/3 \le |X_\varphi \cap Y| \le 5m/4$;
$\mbox{E2} : \mbox{The number of $Y$-good columns  and $Y$-good rows  is at least $r\triangleq 2m/3$.}$

By Lemmas~\ref{Ybound} and~\ref{Ygoodbound}, $X_\varphi$ satisfies the events E1 and E2 with probability $1-2^{-\Omega(m)}$.   Henceforth we assume that $X_{\varphi}$ satisfies the events E1 and E2.  
 
Without loss of generality, let $R_1,\ldots,R_r$ be  the first $r$ $Y$-good rows in $X_\varphi$. For every $Y$-good row $R_i$, $1\le i\le r$ there exists a corresponding witness column $C_j,j\in[n]$ such that $\varphi(x_{ij})\in Y$. Without loss of generality, assume  $C_1,\ldots,C_r$ be  columns that are  witnesses for  $R_1,\ldots, R_r$ being $Y$-good. Further, $X_\varphi(C_j)$  denote the set of values along the column $C_j$.  

Suppose   among $C_1,\ldots, C_r$,  $t\ge 0$ columns are not Y-good, without loss of generality let them be  $C_1,C_2,\ldots,C_t$. 

Each of the column $C_j$ has at least  one variable from  $Y$  and hence the columns $C_1,\ldots, C_t$ contain at least $t$ distinct variables from $Y$.  By  event E2, there are at least $\frac{2m}{3}$ $Y$-good columns that are distinct from $C_1,\ldots, C_t$, each containing exactly one distinct variable from $Y$. Since the total number of variables from $Y$ in $X_{\varphi}$ is at most $5m/4$ (by E1) we have,
$t \leq \frac{5m}{4} - \frac{2m}{3}\leq \frac{7m}{12}.$
That is, at most $7m/12$ of the columns among  $C_1,\ldots, C_r$ are not $Y$-good.  Therefore, at least $r-t$ of the columns among $C_1,\ldots, C_r$ are $Y$ good and hence the number of  $Y$-special positions in $X_\varphi$ is atleast $r-t \ge (2/3-7/12)m = \frac{m}{12}$.  
We conclude,
$\Pr\limits_{\varphi\sim\cal{D}}\left[\gamma_Y \geq \frac{m}{12}\right] \geq 1-2^{-\Omega(m)}$.
\end{proof}

A  row  $R$ in the matrix $A \in (Y\cup Z\cup\{0,1\})^{n\times n}$ said to be {\em $1$-good} if there  is at least one $1$ in $R$ in a column  other than    $Y$-special and $Z$-special positions. The following observation is immediate :
\begin{obs}
\label{lem:ones}
Let $\varphi$ be distributed according to $\mathcal{D}$. Then for any row $($column$)$ $R$: 
$\Pr\limits_{\varphi\sim\cal{D}}\left[\mbox{R is $1$-good}\right]\ge (1-1/n^3).$
\end{obs}

 Finally, we are ready to show that $\perm$ has high rank under a random $\varphi\sim \mathcal{D}$.
 
\begin{theorem}
\label{thm:perm-rank}
$\Pr[\rank(M_{{\perm}_{n}^\varphi}) \ge2^{m/12}] \ge (1-O(1/n^2))/2$.
\end{theorem}
 We need a few notations and Lemmas before proving Theorem~\ref{thm:perm-rank}.  
Consider a $\varphi:X\to Y\cup Z\cup \{0,1\}$ and let the number of $Y$-special positions and the number of $Z$-special positions in $X_{\varphi}$ are both be at least $\gamma$. Let $(i_1,j_1),(i_2,j_2),\ldots,(i_\gamma,j_\gamma)$ be a set of distinct $Y$- special positions that do not share any row or column and $(k_1,\ell_1),(k_2,\ell_2),\ldots,(k_\gamma,\ell_\gamma)$ be a set of distinct $Z$ - special positions in $X_\varphi$ that do not share any row or column.    

%
%
Without loss of generality, suppose $i_1<i_2< \cdots <i_\gamma$ and $k_1<k_2<\cdots < k_\gamma$.  Let $\mathcal{M}$ be the  perfect matching $((i_1,j_1),(k_1,\ell_1)) , \ldots, ((i_\gamma, j_{\gamma}),(k_{\gamma},\ell_{\gamma}))$. \\
For an edge $\{(i_p,j_p),(k_p,\ell_p)\}\in\mathcal{M}$, $1\le p\le \gamma$ consider the  $2\times 2$ matrix :
\begin{center}
$B_p =\begin{pmatrix}
   X_\varphi[i_p,j_p] &  X_\varphi[i_p,\ell_p] \\
    X_\varphi[k_p,j_p] &  X_\varphi[k_p,\ell_p]
\end{pmatrix}.$
\end{center}

There exists a partition $\varphi: X \rightarrow Y \cup Z \cup \{0,1\}$ such that $\rank(M_{B_p^\varphi})=2$. Let $A$ be the matrix obtained by permuting the rows and columns in $X_\varphi$ such that $A$ can be written as in the  Figure~\ref{fig:figure1} below.
\begin{figure}[h]
\begin{center}
\epsfig{file=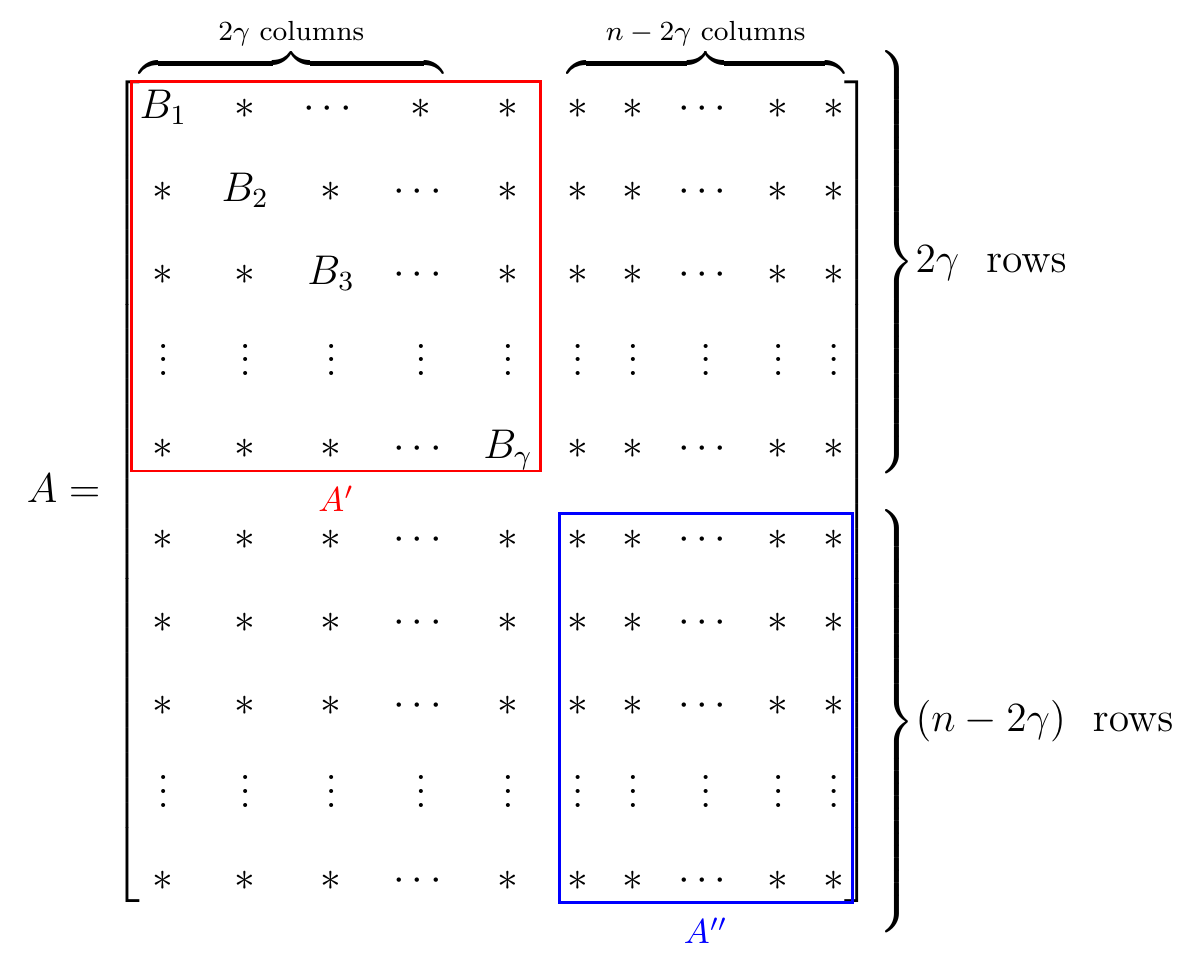, scale=.45}
\caption{The matrix $A$ after permuting the rows an columns. $*$ denotes unspecified entries.}
\label{fig:figure1}
\end{center}
\end{figure}
Since $(i_p,j_p)$ is a $Y$-special position, $(k_p,\ell_p)$ is a $Z$-special position we have $X_\varphi[i_p,j_p]\in Y$, $X_
\varphi[k_p,\ell_p]\in Z$. Also $X_\varphi[i_p,\ell_p]
\in\{0,1\}$ and $X_\varphi[k_p,j_p]\in\{0,1\}$. 
Further, $\rank(M_{{\perm}(B_p)})=2$ if and only if  $ 
X_\varphi[k_p,j_p]=X_\varphi[i_p,\ell_p]=1$.         
Consider the following events:
$F_1$: $\gamma\ge m/12$;~~ and
$F_2$: Rows $i_1,\ldots, i_\gamma, k_1\ldots,k_\gamma$ are $1$-good.
\begin{lemma}
\label{perm-nonzer-rank}
Let ${A}$  and $A''$ be matrices as above. Then 
$\Pr_{\varphi}[perm({A''})\neq 0~|~F_1,F_2]\geq 1-\frac{1}{n^2}.$
\end{lemma}
\begin{proof}
Permanent of any matrix $M$  with entries from $Y\cup Z\cup \{0,1\}$ is zero if and only if $M$ has an all zero $s\times t$ sub matrix such that $s+t=n+1$.(See Theorem~12.1 in \cite{vanL}.) We begin with a bound on the probability that there is at least one  column/row with all zero entries.  Note that the event $F_1$ depends only on the entries of $X_\varphi$ being in $Y\cup Z$,  and the event  $F_2$ is independent of the rows and columns of $A''$.
Thus, for any position $(i,j)$ in $A''$, we have $\Pr[\varphi(x_{i,j})=1]= \kappa n/N(1-2m/N) \approx \kappa n/N$, for large enough $n$. Thus,
\begin{align*}
\Pr[\forall j\in[n], \varphi(x_{ij})=0 ] &\le \left(1-\frac{\kappa n}{N} \right)^{n-2\gamma}\mbox{ and hence, }\\
\Pr[\exists i~\forall j\in[n], \varphi(x_{ij})=0 ] &\leq n\cdot \left( 1-\frac{\kappa n}{N} \right)^{n-2\gamma}~~\mbox{ by union bound}\\
\mbox{Since $\gamma=\mathcal{O}(m)=o(n)$ and $N=n^2$,}\\
\Pr[\exists i~\forall j\in[n], \varphi(x_{ij})=0 ] &\leq n \frac{\left( 1-\frac{\kappa}{n} \right)^n}{\left( 1-\frac{\kappa}{n} \right)^{2\gamma}} 
\end{align*}
As $n\rightarrow\infty$, the denominator $\left(1-\frac{\kappa}{n}\right)^{2\gamma}\rightarrow 1$.
Now, consider $1<c<n'-1$, where $n'=n-2\gamma$. We estimate the probability that there exists an $c\times (n'-c+1)$ all zero sub-matrix of  $A''$. For any $c\times (n'-c+1)$ sub-matrix $M$ of $A''$, $\Pr[M=0] = (1-\kappa/n)^{c(n'-c+1)}$.

As there are ${n' \choose c}^2$ many such sub-matrices $M$ of $A''$, we get
\vspace{-3mm}
\begin{align*}
\Pr[\exists M, M=0] &\le {n' \choose c}^2 (1-\kappa/n)^{c(n'-c+1)}\\
&\le (n'e/c)^c (1-\kappa/n)^{c(n'-c+1)} \approx e^{2c\log((n+1)/c)-\kappa c(n'-c+1)/n} \le  e^{- 4 \log n}
\end{align*}
the last inequality follows since, $\kappa =20\log n$, and  hence $2c \log(n+1/c)-\kappa (c-1)(n'-c+1)/n\le -2 $ for large enough $n$.
{\small  
\begin{align*}
\Pr[perm(\mathcal{A''})=0\mid F_1,F_2] &\leq n\cdot \left( 1-\frac{\kappa}{n} \right)^n + n e^{- 4 \log n} \leq n \left[ \left( 1-\frac{\kappa}{n} \right)^{n/\kappa} \right]^\kappa
 + 1/n^3 \leq n\cdot \mathrm{e}^{-\kappa}\leq  1/n^2.
\end{align*}}
The penultimate inequality in the above is obtained by substituting $\kappa=20\log n$.
\end{proof}

Let $F_3$ denote the event ``${\perm}(A'')\neq 0$''.
Define sets of matrices:
\begin{align*}
\mathcal{A}&\stackrel{\triangle}{=} \left\{ X_\varphi\mid \parbox{1.7in}{ $X_\varphi \in F_1\cap F_2\cap F_3$ and    
 $\exists i\le \gamma, \rank(M_{{\perm}(B_i)})=1$ } \right\};&
\mathcal{B} \stackrel{\triangle}{=}\left\{ X_\varphi\mid\parbox{1.7in}{ $X_\varphi\in F_1\cap F_2 \cap F_3$ and 
 $\forall i\le \gamma, \rank(M_{{\perm}(B_i)})=2$. } \right\} 
\end{align*}

\begin{obs}
\label{obs:rank-1}
$\forall A\in \mathcal{A}$, $\rank({\perm}(A'))< 2^{\gamma}$ and $\forall B\in \mathcal{B}, \rank({\perm}(B))\ge 2^{\gamma}$.
\end{obs}
\begin{lemma}
\label{lem:density} Let $\mathcal{A}$ and $\mathcal{B}$ as defined above. Then
(a) $\Pr\limits_{\varphi\sim \mathcal{D}}[\rank(M_{{\perm}(X_{\varphi})})\ge 2^{\gamma}
)] \ge {\cal D}({\cal B})$; and
(b)  ${\cal D}(B)\ge {\cal D}({\cal A})$,
 where  ${\mathcal D}(S)= \Pr\limits_{\varphi\sim \mathcal{D}}[X_\varphi \in S]$ for  $S\in \{\mathcal{A}, \mathcal{B}\}$.
\end{lemma}
\begin{proof}
(a) follows from Observation~\ref{obs:rank-1}. For (b), we establish a one-one mapping $\pi: \mathcal{A} \to \mathcal{B}$ defined as follows.
Let $\varphi$ be such that $X_\varphi\in \mathcal{A}$.  Consider $1\le p\le \gamma$ such that $\rank(M_{{\perm}(B_p)})=1$.  Then either $X_{\varphi}[k_p,j_p]=0$ or $X_{\varphi}[i_p,\ell_p]=0 $ or both. If  $X_{\varphi}[k_p,j_p]=0$, then set $X_{\varphi'}[k_p,j_p]=1$, and $X_{\varphi'}[k_p,\iota_p]=0$ where $\iota_p\in[n]\setminus\{j_1\ldots, j_\gamma, \ell_1\ldots, \ell_\gamma\}$ is the first index from left such that  $X_{\varphi}[k_p,\iota_p]=1$. Similarly, if  $X_{\varphi}[i_p,\ell_p]=0$, then set $X_{\varphi'}[i_p,\ell_p]=1$, and $X_{\varphi'}[i_p,\lambda_p]=0$ where $\lambda_p\in[n]\setminus\{j_1\ldots, j_\gamma, \ell_1\ldots, \ell_\gamma\}$
is the first index from left such that  $X_{\varphi}[k_p,\lambda_p]=1$. 
Let $\varphi'$ be the partition obtained from $\varphi$ by applying the above mentioned swap operation for every $1\le p\le \gamma $ with  $\rank(M_{{\perm}(B_p)})=1$, keeping other values of $\varphi$ untouched. Clearly $X_{\varphi'}\in \mathcal{B}$.  Set $\pi(X_\varphi) \mapsto X_{\varphi'}$.   It can be seen that $\pi$ is an one-one map. Further, for any fixed $A\in \mathcal{A}$,  $\Pr_\varphi[X_\varphi=A] = \Pr_\varphi[X_{\varphi} =\pi(A)]$ since $\varphi$ is independently and identically distributed for any position in the matrix. Thus we have $\mathcal{D}(\mathcal{A})\le\mathcal{D}(\mathcal{B})$.
\end{proof}

\begin{proof}[Proof of Theorem~\ref{thm:perm-rank}]
It is enough to argue that $\Pr_{\varphi\sim \mathcal{D}}[X\varphi\in \mathcal{A}\cup\mathcal{B}] = 1-O(\frac{1}{n^2})$, as $\mathcal{A}\cap\mathcal{B}=\emptyset$ . 
Now, $\Pr_{\varphi\sim \mathcal{D}}[X_\varphi\in \mathcal{A}\cup\mathcal{B}] =\Pr_{\varphi\sim \mathcal{D}}[F_1\cap F_2\cap F_3] $.  By Lemma~\ref{Ybound}, $Pr_{\varphi\sim \mathcal{D}}[F_1] = 1-2^{-\Omega(m)}$, by Lemma~\ref{lem:ones} combined with union bound we have 
$Pr_{\varphi\sim \mathcal{D}}[F_2] \ge 1-\gamma/n^3$ and by Lemma~\ref{perm-nonzer-rank} $Pr_{\varphi\sim \mathcal{D}}[F_3|F_1,F_2] \ge 1-2/n^2$.
Thus we conclude $\Pr_{\varphi\sim \mathcal{D}}[F_1\cap F_2\cap F_3] = 1-O(\frac{1}{n^2})$.
As $\mathcal{D}(\mathcal{B}\cap\mathcal{A})= \mathcal{D}(\mathcal{A})+\mathcal{D}(\mathcal{B})
$, by Lemma~\ref{lem:density} we have  $\Pr_{\varphi\sim \mathcal{D}}[\rank(M_{{\perm}(X_{\varphi})}) \ge 2^{\gamma}] \ge 1/2(1-O(\frac{1}{n^2}))$.
\end{proof}
\subsection{Putting them all together}
\label{sec:final}
\paragraph*{Proof of Theorem~\ref{thm:depth-three}}
\begin{proof}
Suppose $p_{lin}= \sum_{i=1}^s\prod_{j=1}^t f_{i,j}$ where $f_{i,j}$ are syntactically multi-linear $\Sigma\Pi\Sigma$ formula, with $s<2^{N^{1/4}}$, Let $f_{i,j}= \sum_{k=1}^{s'} T_{i,j,k}$, and $T_{i,j,k}$ are products of variable disjoint linear forms, and hence $\rop$s. Further, since the bottom fan-in of each $f_{i,j}$ is bounded by $N^{1/2+\lambda}$, we have $s_{T_{i,j,k}}\le 2^{N^{1/2+\lambda}}$.
Then by Lemma~\ref{lem:rank-sum-product} and union bound there is an $i,j,k$ such that  $\rank(M_{T_{i,j,k}^\varphi}) \ge 2^{N^{4/15}}$ with probability at most  $s t s'2^{-\Omega(m)}$.
 By Lemma~\ref{sub-aditivity2} and \ref{sub-multiplicativity2}, we have   $\mrank(\pcm_{p_{lin}^{\varphi}}) \le 2^{N^{4/15}}$  with probability $1-o(1)$.  However  by Lemma~\ref{lem:prod-linear}, $\mrank(\pcm_{p_{lin}^\varphi})=\rank(M_{p_{lin}^\varphi}) = 2^{\Omega(m)}$ with probability at least $1-2^{-\Omega(m)}$, a contradiction. Hence $ss'= 2^{\Omega(N^{1/4})}$. 
\end{proof}

\paragraph*{Proof of Theorem~\ref{thm:sum-product-rop}}
\begin{proof}
Suppose $s < 2^{N^{1/4}}$. Then by Lemma~\ref{lem:rank-sum-product}, the probability that there is an $f_{i,j}$ with $\rank(M_{f_{i,j}^\varphi}) \ge 2^{N^{4/15}}$ is at most $2^{-\Omega(m)} s\cdot t =o(1)$. By Lemma~\ref{sub-aditivity2} and \ref{sub-multiplicativity2}, we have   $\mrank(\pcm_{p_{lin}^\varphi}) \le 2^{N^{4/15}}$  with probability $1-o(1)$. However  by Lemma~\ref{lem:prod-linear}, $\mrank(\pcm_{p_{lin}^\varphi})=\rank(M_{p_{lin}^\varphi}) = 2^{\Omega(m)}$ with probability $1-2^{-\Omega(m)}$, a contradiction. Hence $s\ge 2^{N^{1/4}}$.
\end{proof}

\paragraph*{Proof of Theorem~\ref{thm:sum-product-rop-perm}}
\begin{proof}
Suppose $s < 2^{N^{1/4}}$. Then by Lemma~\ref{lem:rank-sum-product}, Probability that there is an $f_{i,j}$ with $\rank(M_{f_{i,j}^\varphi}) \ge 2^{N^{4/15}}$ is at most $2^{-\Omega(m)} s\cdot t =o(1)$. By Lemma~\ref{sub-aditivity2} and \ref{sub-multiplicativity2}, we have  $\mrank(\pcm_{\perm_{n}^\varphi}) \le 2^{N^{4/15}}$  with probability $1-o(1)$. However, by Theorem~\ref{thm:perm-rank},  $\mrank(\pcm_{\perm_{n}^\varphi})=\rank(\perm_{n}^\varphi) = 2^{\Omega(m)}$ with probability $(1-1/n^2)/2$, a contradiction. Hence $s\ge 2^{N^{1/4}}$.
\end{proof}

\section*{Acknowledgements} We thank anonymous reviewers of an earlier version of the paper  for suggestions which improved the presentation.   Further, we thank one of the  anonymous reviewers  for pointing out an outline of argument for Lemma~\ref{lem:rank-upper bound}. 

\bibliographystyle{alpha}
\bibliography{refbib}

\end{document}